\tikzset{photon/.style={decorate, decoration={snake}},
   wplus/.style={postaction={decorate},
   decoration={markings, mark=at position 0.55 with {\arrow{>}}}}}
\title{Diphoton decay of the higgs \\ from the Epstein--Glaser viewpoint}
\author{Pawe{\l} Duch$^{1,2}$,
Michael D\"utsch$^3$ and Jos\'e M. Gracia-Bond\'ia$^{4,5}$%
\footnote{Email: jmgb@unizar.es}
\\[6pt]
{\footnotesize $^1$ Institut f\"ur Theoretische Physik,
Universit\"at Leipzig, 04103 Leipzig, Germany}
\\[3pt]
{\footnotesize $^2$ Max-Planck Institute for Mathematics in the
Sciences, 04103 Leipzig, Germany}
\\[3pt]
{\footnotesize $^3$ Institut f\"ur Theoretische Physik, Universit\"at
G\"ottingen, G\"ottingen D-37077, Germany}
\\[3pt]
{\footnotesize $^4$ CAPA and Departamento de F\'isica Te\'orica,
Universidad de Zaragoza, Zaragoza 50009, Spain}
\\[3pt]
{\footnotesize $^5$ Laboratorio de F\'isica Te\'orica y Computacional,
Universidad de Costa Rica, San Pedro 11501, Costa Rica}
}
\date{\today}
\newcommand{\version}{Agla\'e: \today}
\DeclareMathOperator{\sd}{sd}       
\DeclareMathOperator{\sgn}{sgn}     
\DeclareMathOperator{\supp}{supp}   
\DeclareMathOperator{\T}{T}         
\newcommand{\al}{\alpha}            
\newcommand{\bt}{\beta}             
\newcommand{\Dl}{\Delta}            
\newcommand{\dl}{\delta}            
\newcommand{\Ga}{\Gamma}            
\newcommand{\ga}{\gamma}            
\newcommand{\La}{\Lambda}           
\newcommand{\la}{\lambda}           
\newcommand{\Om}{\Omega}            
\newcommand{\om}{\omega}            
\renewcommand{\th}{\theta}          
\newcommand{\vf}{\varphi}           
\newcommand{\bC}{\mathbb{C}}        
\newcommand{\bM}{{\mathbb{M}}}      
\newcommand{\bN}{\mathbb{N}}        
\newcommand{\bR}{{\mathbb{R}}}      
\newcommand{\bS}{\mathbb{S}}        
\newcommand{\sA}{\mathcal{A}}       
\newcommand{\sK}{\mathcal{K}}       
\newcommand{\sM}{\mathcal{M}}       
\newcommand{\sR}{\mathcal{R}}       
\newcommand{\sS}{\mathcal{S}}       
\newcommand{\sU}{\mathcal{U}}       
\newcommand{\sV}{\mathcal{V}} 
\newcommand{\const}{\mathrm{const}}   
\newcommand{\gi}{\mathrm{gi}}         
\newcommand{\naive}{\mathrm{naive}}   
\newcommand{\del}{\partial}         
\newcommand{\downto}{\downarrow}    
\newcommand{\less}{\setminus}       
\newcommand{\longto}{\longrightarrow} 
\newcommand{\otto}{\leftrightarrow} 
\newcommand{\ovl}{\overline}        
\newcommand{\rtri}{\blacktriangleright} 
\newcommand{\vecvec}{\overleftrightarrow} 
\newcommand{\wt}{\widetilde}        
\newcommand{\x}{\times}             
\newcommand{\7}{\dagger}            
\newcommand{\8}{\bullet}            
\renewcommand{\.}{\cdot}            
\newcommand{\half}{{\mathchoice{\thalf}{\thalf}{\shalf}{\shalf}}}
\newcommand{\ihalf}{\tfrac{i}{2}}   
\newcommand{\quarter}{\tfrac{1}{4}} 
\newcommand{\shalf}{{\scriptstyle\frac{1}{2}}} 
\newcommand{\thalf}{\tfrac{1}{2}}   
\bmdefine{\kk}{k}                   
\bmdefine{\PP}{P}                   
\bmdefine{\pp}{p}                   
\bmdefine{\qq}{q}                   
\bmdefine{\yy}{y}                   
\bmdefine{\zero}{0}                 
\newcommand{\piso}[1]{\lfloor#1\rfloor} 
\newcommand{\set}[1]{\{\,#1\,\}}    
\newcommand{\vev}[1]{\langle\!\langle#1\rangle\!\rangle} 
\newcommand{\word}[1]{\quad\text{#1}\quad} 
\newcommand{\braket}[2]{\langle#1\mathbin|#2\rangle} 
\newcommand{\marker}{\vspace{6pt}\noindent{$\rtri$}\enspace} 
\theoremstyle{plain}
\newtheorem{thm}{Theorem}           
\newtheorem{lema}[thm]{Lemma}       
\newtheorem{corl}[thm]{Corollary}   
\theoremstyle{definition}
\theoremstyle{remark}
\newtheorem{remk}{Remark}           
\numberwithin{equation}{section}
\numberwithin{figure}{section}
\renewcommand{\section}{\@startsection{section}{1}{\z@}%
                        {-3.5ex \@plus -1ex \@minus -.2ex}%
                        {2.3ex \@plus.2ex}%
                        {\normalfont\large\bfseries}}
\renewcommand{\subsection}{\@startsection{subsection}{2}{\z@}%
                        {-3.25ex \@plus -1ex \@minus -.2ex}%
                        {1.5ex \@plus .2ex}%
                        {\normalfont\normalsize\bfseries}}
\renewcommand{\subsubsection}{\@startsection{subsubsection}{3}{\z@}%
                        {-3.25ex \@plus -1ex \@minus -.2ex}%
                        {1.5ex \@plus .2ex}%
                        {\normalfont\normalsize\itshape}}
\renewcommand{\@dotsep}{200} 
\begin{document}

\maketitle

\begin{flushright}
\textit{To the memory of G\"unter Scharf and Raymond Stora}
\end{flushright}

\medskip

\begin{abstract}
We revisit a nearly ten-year old controversy on the diphoton decay of
the Higgs particle. To a large extent, the controversy turned around
the respective merits of the regularization techniques employed. The
novel aspect of our approach is that \textit{no} regularization
techniques are brought to bear: we work within the
Bogoliubov--Epstein--Glaser scheme of renormalization by extension of
distributions. Solving the problem actually required an expansion of
this method's toolkit, furnished in the paper.
\end{abstract}

\begin{flushright}
\textit{Die Eule der Minerva beginnt erst mit der einbrechenden
D\"ammerung ihren Flug}

\medskip

-- Georg Wilhelm Friedrich Hegel
\end{flushright}

\medskip
%

\section{Introduction: the controversy}
\label{sec:intro}

Due to its cleanness, it is hard to overstate the experimental
importance of the decay of the Higgs particle into two photons. It
goes mainly via virtual $W$-bosons, the heavier charged particles of
flavourdynamics. The amplitude of this contribution was calculated to
the first non-vanishing order (one-loop, cubic in the couplings) long
ago in the light-higgs limit~\cite{Eliseo} -- and then ``exactly''
in~\cite{RussianCharge2}. The accepted result was confirmed many times
-- see~\cite{elBelen} for a particularly clever calculation. It does
\textit{not} vanish in the heavy-higgs limit -- which seems to fly in
the face of the ``decoupling theorem'' (DT) in~\cite{Loscarrozas}, as
often understood.

Much more recently, those calculations were questioned in
\cite{GastmansWuWu1, GastmansWuWu2}. The ensuing debate highlights the
\textit{theoretical} relevance of this decay. The authors of these
papers made the point that, since the higgs cannot couple directly to
the photons, the one-loop contribution must be finite: there are no
couplings requiring ``renormalization''. The roundabout procedures
through ``renormalizable gauges'', they concluded, were unnecessary.
Eschewing dimensional regularization, they recomputed the amplitude in
the unitary gauge of electroweak (EW) theory. They did obtain a result
differing from the standard one by an additive constant, which shows
up for instance in the heavy-higgs limit -- whereby their result is
equal to zero.

There was no shortage of rejoinders~\cite{TresChinosCutoff,
MartianCharge, RussianCharge3, TresChinosAgudos, Jaguarlehner,
PichininiPP, DedesS, Weinzierl14} to \cite{GastmansWuWu1,
GastmansWuWu2}. The authors of~\cite{RussianCharge3} are the ones of
the original calculation~\cite{RussianCharge2}. Those papers made
several points, some rather implausibly arguing that at a given point
in the calculation in \cite{GastmansWuWu2} electromagnetic gauge
invariance is lost, and criticizing the interpretation of the DT made
in~\cite{GastmansWuWu1, GastmansWuWu2}. There was in some of the the
rejoinders an explanatory reliance on the heuristics of the
Brout--Englert--Higgs mechanism, throwing back the so-called
``equivalence theorem'' (GBET).

The criticisms received a rejoinder in turn in~\cite{WusStrikeBack}.
This later paper argues by the example that two computations of the
same process in different gauges ($R_\xi$ versus unitary gauge) may
yield different results. This goes against the grain, although of
course no theorem contradicts such an assertion. Meanwhile, a
dispersion relation calculation carried out in~\cite{ChristovaI}
appeared to support the contentions of~\cite{GastmansWuWu1,
GastmansWuWu2}, and got in turn a -- quite thoughtful -- rejoinder
in~\cite{MelnikovVainshtein}. More recent papers dealing with the same
or related issues are \cite{GegeliaM18,BoraChristovaEberl}.

By and large, the majority's opinion and the experimental results
\cite{Jacobs} support the first tally. On the other hand, from the
theoretical point of view the situation is still obscure: it had to be
so, since both parties draw strength from different casuistics of the
calculations in perturbative quantum field theory.

\smallskip

The debate about the uses and abuses of the unitary gauge and the role
of the decoupling and equivalence ``theorems'' is to be saluted as
salutary. And it is safe to admit that up to now we lack a full
conceptual understanding of the problem. The cleanest way to address
this lack is surely to renounce \textit{all} the heuristics of
mathematically ill-defined quantities, in favour of a method in which
there can be no argument on the meaning of infinite terms. Such is the
truly (perturbatively) stringent scheme by Bogoliubov, Epstein and
Glaser (BEG) of ``renormalization'' without regularization, by
extension of distributions.

In the BEG construction, governed by causality, there is no such thing
as a ``divergent diagram'': one never encounters infinities. There
may, however, remain in the extension procedures some additive
\textit{ambiguity}, that can be restricted (but not always completely
removed) by physical principles. This is rather to be regarded as a
strength of the BEG paradigm, because those ambiguities express
precisely how, and to what extent, the theory is determined by the
fundamental principles of perturbative QFT.

A particular advantage of the inductive BEG construction
\cite{EpsteinGlaser73} of the (functional) $\bS$-matrix is that in
principle one is allowed to stay on configuration space, which makes
more transparent the physics under examination. For examples of
calculations within the BEG scheme explicitly carried out in
configuration space, see \cite{Elara} or \cite[Sect.~3.5]{Duetsch19}.
It is only for computational convenience that we switch at some moment
to momentum space.

Since we do not deal in infinities, we refer as \textit{normalization}
to the processes taking the place of regularization and
renormalization in the BEG framework. For its relative paucity of
diagrams, in our context the underlying argument is made clearer by
working mostly in the unitary gauge -- whereupon only the
physical particles' data are brought to bear.%
\footnote{The paper~\cite{IrgesKoutroulis17} dwells usefully on the
subject of the $R_\xi$-versus-unitary gauges, leaning to demonstrate
the validity of the latter at the quantum level.}

To summarize, so far: we were motivated to tackle this subject by
wondering why most knowledgeable people, borrowing different (but all
apparently sound) methods to work on such a basic process, were
divided on the outcome. It all turns around a subtlety uncovered by
use of the BEG normalization. That condenses the purpose of the
present paper.

\subsection{Main results and plan of the article}
\label{ssc:facile-est}

In App.~\ref{app:spin-one-basics} we introduce our conventions and
notations, recalling a few well-known formulae of QFT needed in the
body of the paper, in particular the propagators for the EW theory in
the unitary gauge. Let $m_h$ denote the mass of the higgs~$h$. The
amplitude coming from the one-loop calculations may be quoted as
\cite{HungryHunters, BardinP99, GranLev}:
$$
\sA = \frac{g\al}{2\pi M} F_1(\rho) P_{\mu\nu},
$$
with $\al$ the fine structure constant, $g$ the EW coupling constant,
$M$ the mass of the intermediate $W$-boson and $\rho := m^2_h/4M^2$.
The polarization factor $P_{\mu\nu}$, reflecting electromagnetic gauge
invariance (EGI) of~$\sA$,%
\footnote{That is, transversality of the outgoing photons.}
is written in this paper as
\begin{equation}
P_{\mu\nu} := (k_1 k_2) g_{\mu\nu} - k_{1\nu} k_{2\mu}; \quad
(P_{\8\nu} k_1) = (P_{\mu\8} k_2) = 0,
\label{eq:in-tractu-temporis} 
\end{equation}
with $k_1$, $k_2$ the outgoing photons' momenta. Finally, for the 
dimensionless factor:
\begin{equation}
F_1(\rho) := 2 + \frac{3}{\rho} 
+ \frac{3}{\rho} \biggl( 2 - \frac{1}{\rho} \biggr) f(\rho).
\label{eq:bone-of-contention} 
\end{equation}
Now that we are at that, we quote as well the comparable result for a
charged scalar particle of mass~$M$ at the place of the $W$-boson:
\begin{equation}
F_0(\rho) = \frac{1}{\rho} \biggl( 1 - \frac{f(\rho)}{\rho} \biggr);
\word{so that} F_1(\rho) = 3F_0(\rho) + \frac{6f(\rho)}{\rho} + 2.
\label{eq:magister-dixit} 
\end{equation}
For the benefit of the reader coming to the subject of this paper for
the first time, App.~\ref{app:curioser} introduces the distribution
$f(\rho)$ appearing in both $F_1$~\eqref{eq:bone-of-contention} and
$F_0$~\eqref{eq:magister-dixit} -- as well as in the amplitude of
diphoton decay of~$h$ via virtual fermions.

The bone of contention is that the first summand~$2$ in
\eqref{eq:bone-of-contention} should not be there, according to
\cite{GastmansWuWu1, GastmansWuWu2, ChristovaI}. Relations
\eqref{eq:f1} and \eqref{eq:arcsin-2} tell us that, as $\rho\downto0$:
$$
F_1 = 2 + \frac{3}{\rho} 
+ \biggl( \frac{6}{\rho} - \frac{3}{\rho^2} \biggr)
\biggl( \rho + \frac{\rho^2}{3} + \frac{8\rho^3}{45} +\cdots \biggr)
= 7 + \frac{22}{15}\,\rho + O(\rho^2);
$$
so $F_1(0) = 7$ and $F_1(\infty) = 2$ from
\eqref{eq:bone-of-contention}. Precisely the former figure is what was
calculated in the paper~\cite{Eliseo}. The result argued by the
``heretics'' in the controversy is $F_1 - 2$, so their respective
assertions are instead $F_1(0) = 5$ and $F_1(\infty) = 0$. Also, from
\eqref{eq:magister-dixit}: $F_0(0) = -1/3$ and $F_0(\infty) = 0$.

\smallskip

Appendices A and B of this paper deal with conventions and
mathematical prerequisites. The basics of the BEG scheme are recalled
in Appendix~\ref{sec:Streu}. Understanding of the BEG method is
indispensable in what follows, and even readers familiar with it are
advised not to miss our review. The relation between the normalization
problem by extension of distributions (or by ``distribution
splitting'') and \textit{dispersion integrals} is treated in its
subsection~\ref{sec:ipso-facto}. New results in this respect are
required, announced in the short Section \ref{sec:quam-scriptum} and
proved in subsections \ref{ssc:do-ut-des} and
\ref{ssc:hasta-ahi-podiamos-llegar} of this paper. So for
\textit{aficionados} of BEG normalization there is novelty here --
whose interest goes beyond the particular problem that motivated~it.

Sections~\ref{sec:argumentum} and ~\ref{sec:soberbia-pagana}
constitute the heart of the paper. The scalar model leading to $F_0$
is worked out in Section~\ref{sec:argumentum}. One is able to perform
the ``adiabatic limit'' of Epstein and Glaser at an intermediate step,
which simplifies computations -- this is rigorously justified. This
``toy model'' allows the reader to familiarize with the BEG
construction of time-ordered products in a relatively simple case. For
it, the ambiguity in the Epstein--Glaser result can be disposed of,
and the unique outcome happens to coincide with the result of a
``naive'' on-shell calculation, of the kind performed
in~\cite{ChristovaI}.

Finally, in Section~\ref{sec:soberbia-pagana}, we compute the EW
amplitude, working first in the unitary gauge. We start in earnest by
illustrating in this relevant instance the machinery of the BEG
formalism in constructing time-ordered products, at the lowest
non-trivial order: from cubic interaction vertices, identified to
time-ordered products at first order in the couplings, we
\textit{derive} the quartic, second-order $AAWW^\7$-vertex.

It is time to aver why the ``no-renormalization'' argument in
\cite{GastmansWuWu2} is not watertight. A direct $h\ga\ga$~coupling in
flavourdynamics is forbidden also because of EGI. Thus to obtain the
general amplitude, which lives off-shell, one must add to the naive
calculations a polynomial in the external momenta, of degree given by
the singular order of that amplitude. Computing the $1$-loop
contribution in the unitary gauge by the Epstein-Glaser method, we
ratify this fact. To find the coefficients of that polynomial, beyond
EGI here we call upon gauge-fixing independence of the on-shell
amplitude. This locks in the indetermination; and in the end we do
obtain~$F_1(\rho)$. Within the unitary gauge, a different argument to
the same purpose is discussed at the end of this
Section~\ref{sec:soberbia-pagana}. Section~\ref{sec:die-Eule} is the
conclusion.

\section{The obstruction to distribution splitting for null momenta}
\label{sec:quam-scriptum}

Formula \eqref{eq:pro-reo} in App.~\ref{sec:Streu} is our main
workhorse: in momentum space the Epstein--Glaser distribution
splitting amounts to a dispersion integral. But it pertains to remark
that, by construction, prescriptions \eqref{eq:hoist-with-retard}
and~\eqref{eq:pro-reo} are in principle valid \textit{only for
timelike}~$k$. Thus, in order to solve the problem in this paper, one
has to run an extra~mile. The explicit splitting procedure introduced
here exhibits relevant novel features: we have to compute the central
solution $a^c(k_1,k_2)$ for \textit{null momenta}. Hence, one cannot
immediately use the dispersion integrals \eqref{eq:hoist-with-retard}
or~\eqref{eq:pro-reo}. On trying to work instead with the convolution
integral \eqref{eq:ac}, there appears the problem that, in spite of
$k_j^2 = 0$, it generally holds that $(k_j - v_j)^2 \neq 0$ because
$v_j \in V_+$; it \textit{does not suffice} to know the causal
distribution $d(k_1,k_2)$ only for $k_1^2 = 0 = k_2^2$.

The next section solves this problem for models such that $0 < (k_1 +
k_2)^2 < 4M^2$ and $k_1^0 k_2^0 > 0$. The proof's strategy is as
follows: starting from the dispersion
integral~\eqref{eq:hoist-with-retard} for $k_1^2 > 0$, $k_2^2 > 0$ and
$k_1^0 k_2^0 > 0$, we intend to show that $d(k_1,k_2)$ is regular
enough that this integral commutes with the limit $(k_1^2 \downto 0
\wedge k_2^2 \downto 0)$. Therefore the dispersion
integrals~\eqref{eq:hoist-with-retard} and~\eqref{eq:pro-reo} keep
their usefulness for $k_1^2 = 0 = k_2^2$: indeed, for computing
$a^c(k_1,k_2)|_{k_1^2=0=k_2^2}$ it suffices to know $d(k_1,k_2)$ only
for $k_1^2 = k_2^2 = 0$, because $k_1^2 = k_2^2 = 0$ implies $(tk_1)^2
= (tk_2)^2 = 0$ for all~$t$.

Crucially, in the resulting dispersion integrals
\eqref{eq:hoist-with-retard} and \eqref{eq:pro-reo} for $k_1^2 = 0 =
k_2^2$, the parameter $\om$ is the singular order of the
\textit{off-shell} $d(k_1,k_2)$. As a consequence, the general
solution (prior to imposition of other invariance rules) of the
distribution splitting is obtained by adding to
$a^c(k_1,k_2)|_{k_1^2=0=k_2^2}$ a polynomial in $k_1,k_2$, in
principle arbitrary, whose degree is given by the singular order of
the off-shell amplitude $d(k_1,k_2)$. Now, it frequently happens that
the singular order of $d(k_1,k_2)|_{k_1^2=0=k_2^2}$ has a
\textit{smaller} value. Consequently, it may happen that the required
dispersion integral appears to be ``oversubtracted'' -- i.e., it would
be convergent also for a smaller value of~$\om$. Examples for this are
the ``toy model'' in the next section and the EW diphoton decay of the
higgs in the unitary gauge (subsections
\ref{ssc:hasta-ahi-podiamos-llegar} and \ref{ssc:quod-feceris},
respectively).

These issues were realized by Raymond Stora, who, referring to the
very subject process of this paper, pointed out to one of us that the
good behaviour of the absorptive part of the form factor involving
Compton scattering of the $W$-bosons should not make one forget that
BEG-generated dispersion integrals, just as perturbative
renormalization theory in general, applies off-shell.%
\footnote{Private communication, early 2013.}

\section{Higgs to diphoton decay via a charged scalar field}
\label{sec:argumentum}

The scalar electrodynamics computation leading to~$F_0$ works like a
kind of toy model, allowing the reader to familiarize with our methods
in a less complicated, although non-trivial case. We develop it in the
present section. Notice the following: in the Epstein--Glaser scheme
the ``seagull'' $e^2 AA\vf\vf^\7$-vertex \textit{is derived} by
implementing EGI within the construction rules of the method -- as any
other part of $T_2$~\cite{DKS93}. We give full details on how this
comes about for the quartic vertex in the EW theory in subsection
\ref{ssc:per-angusta}. The game here would be similar, only simpler.
The reader is advised to keep in mind the methods and standard
notations recalled in subsection~\ref{ssc:lost-in-translation}.

\subsection{A causal distribution on-shell}
\label{ssc:d-scalar}

The starting point is given by the lower order time-ordered products
(TOPs):
\begin{align*}
&T_1(x_3) = g M\,h(x_3)\, \vf(x_3)\, \vf^\7(x_3); \quad
T_1(x_j) = -ie A^\la(x_j)\, \vf^\7(x_j) \vecvec{\del_\la} \vf(x_j),
\;\; j = 1,2;
\\
& T_2(x_1,x_2)
\\
&\quad = -e^2 A^\mu(x_1) A^\nu(x_2) \bigl[ \vf^\7(x_1)
\,\del_\mu \Delta^F(x_1 - x_2) \,\del_\nu\vf(x_2) -
\del_\mu\vf^\7(x_1) \Delta^F(x_1 - x_2) \,\del_\nu\vf(x_2)
\\
&\quad
+ \vf^\7(x_1)\, \bigl( \del_\nu\del_\mu \Delta^F(x_1 - x_2)
+ i g_{\mu\nu} \,\dl(x_1 - x_2) \bigr) \vf(x_2)
- \del_\mu\vf^\7(x_1) \,\del_\nu \Delta^F(x_1 - x_2)\, \vf(x_2) \bigr]
\\
&\quad + (x_1 \otto x_2) + T_1(x_1)\,T_1(x_2) + [\text{irrelevant loop
diagram terms}],
\end{align*}
where $\Dl^F$ denotes the Feynman propagator~\eqref{eq:post-factum}.

From our formulas~\eqref{eq:quod-non-speratur1} and
\eqref{eq:quod-non-speratur2}:%
\footnote{The $D_n$ are always linear combinations of commutators.}
\begin{equation}
D_3(x_1,x_2,x_3) = -[\ovl{T}_1(x_1), T_2(x_2, x_3)]
- [\ovl{T}_1(x_2), T_2(x_1, x_3)] + [\ovl{T}_2(x_1, x_2), T_1(x_3)].
\label{eq:para-llorar} 
\end{equation}

Because the photons emitted at $x_1,x_2$ are on-shell, only the third
commutator is relevant here -- in the language of Cutkosky rules, one
needs only the triangle cut separating the higgs vertex from the
propagator connecting the photons. We give the explanation further on.
From the general formula for the antichronological product
\eqref{eq:full-house}, we particularly know that
\begin{equation}
\ovl{T}_1(x_1) = T_1(x_1); \quad 
\ovl{T}_2(x_1, x_2) 
= - T_2(x_1, x_2) + T_1(x_1) T_1(x_2) + T_1(x_2) T_1(x_1).
\label{eq:ovlT} 
\end{equation}
For the same reasons just argued, only the connected tree diagram part
of the $T_2(x_1, x_2)$ summand in $\ovl{T}_2(x_1, x_2)$ contributes.

A most convenient parallel for the coming calculation is the treatment
of the vertex function in QED in the first edition of the finite QED
book by Scharf~\cite[Sect.~3.8]{Scharf89}. Going to the contractions,
bringing in the vertices and the propagators
\eqref{eq:accidit-in-puncto}, \eqref{eq:nullius-in-verba}, apart from
a factor $4ge^2M$ we obtain:
\begin{align*}
& A_\mu(x_1) A_\nu(x_2) h(x_3) \bigl[
\Dl^-(1)\,\del^\mu \Delta^F(1 - 2)\,\del^\nu\Dl^-(2)
- \del^\mu\Dl^-(1)\,\del^\nu \Delta^F(1 - 2)\,\Dl^-(2)
\\
&\quad - \del^\mu\Dl^-(1)\,\Delta^F(1 - 2)\,\del^\nu\Dl^-(2)
+ \Dl^-(1)\bigl( \del^\mu\del^\nu \Delta^F(1 - 2)
+ i g^{\mu\nu}\,\dl(1 - 2) \bigr) \Dl^-(2)
\\
&\quad -[\text{the same four terms with $\Delta^-$ replaced by
$\Delta^+$}] +\cdots \bigr]
\\
&=: A_\mu(x_1) A_\nu(x_2) h(x_3)\, d^{\mu\nu}(1,2),
\end{align*}
where $1 \equiv y_1 := x_1 - x_3$, $2 \equiv y_2 := x_2 - x_3$. Here
and further down, the dots stand for the terms coming from the other
two cuts and further terms not contributing to the on-shell amplitude.
Note the advertised additional $+i g^{\mu\nu}\,\dl$ to
$\del^\mu\del^\nu \Delta^F$, corresponding to the ``closed seagull''
or fish-like diagram contribution to the $h \to 2\ga$ decay in this
model.

\smallskip

We now proceed to momentum space, where computations are carried out
more simply. For Fourier transformations, consult the convention
\eqref{eq:def_Fourier}. In this section and the next, in keeping with
physicists' notation, we indicate the transforms by just exhibiting
the variables, namely: $d^\mu(k_1,k_2) \equiv \hat d^\mu(k_1,k_2)$. We
obtain
\begin{align}
d^{\mu\nu}(k_1,k_2) 
= & \frac{1}{(2\pi)^2} \bigl[
4(I_+^{\mu\nu} - I_-^{\mu\nu}) + 2 k_2^\nu(I_+^\mu - I_-^\mu)
- 2 k_1^\mu(I_+^\nu - I_-^\nu) - k_1^\mu k_2^\nu(I_+ - I_-)
\notag
\\
&\quad
- \frac{i}{(2\pi)^2}\, g^{\mu\nu}\, (J_+ - J_-) \bigr] +\cdots
\label{eq:suerte-o-verdad} 
\end{align}
with the integrals
\begin{align}
I_\pm^{\{\cdot|\mu|\mu\nu\}}(k_1,k_2)
&:= \int d^4k\, \{1|k^\mu|k^\mu k^\nu\} \,\Dl^\pm(k_1 - k) \Delta^F(k)
\,\Dl^\pm(k + k_2),
\nonumber \\
J_\pm(k_1,k_2) &:= \int d^4k\, \Dl^\pm(k_1 - k)\,\Dl^\pm(k + k_2),
\label{eq:def-J} 
\end{align}
where the $J_\pm$-term is the contribution of the fish-like diagram.
Keep in mind that the terms belonging to~$A'_3:=A_3-T_3$ are those
coming from the integrals $I^{\cdot|\mu|\mu\nu}_-$ and $J_-$, whereas
the contribution of~$R'_3:=R_3-T_3$ is given by the integrals
$I^{\cdot|\mu|\mu\nu}_+$ and~$J_+$. 

For our purposes one may perform the adiabatic limit already at this
stage. Since all internal lines of the diagrams correspond to massive
fields, this limit can be done here in the naive way by just setting
the switching function $g(x)$ in~\eqref{EG-summacumlaude} to~$1$:
\begin{align}
& \int dx_1\,dx_2\,dx_3\, A^\mu(x_1) A^\nu(x_2) h(x_3)\,
d^{\mu\nu}(x_1 - x_3, x_2 - x_3)
\notag
\\
&\enspace = (2\pi)^2 \int dk_1\,dk_2\, h(k_1 + k_2) A^\mu(-k_1)
A^\nu(-k_2) \,d^{\mu\nu}(k_1,k_2).
\label{eq:dAAh} 
\end{align}
In this limit the momenta $k_1$ and $k_2$ become the momenta of the
external photons: $k_1^2 = k_2^2 = 0$.

From now on, we compute $d^{\mu\nu}(k_1,k_2)|_{k_1^2=0=k_2^2}$. Were
we to have included the other cuts in \eqref{eq:para-llorar} or
$T_1T_1T_1$-terms, there would appear $\Dl^\pm$-type propagators at
the place of the Feynman propagators above. The former are $\sim
\dl(k^2 - M^2)$, with $k$ denoting the internal momentum variable in
the loop: so to speak, in contrast with the Feynman
propagators, the $\Dl^\pm$ are ``always on-shell'', even within loops.%
\footnote{This point is made in \cite[Sect.~6.4]{Diag}.}
Thus no further internal momenta can be on-shell: assuming $k^2 = M^2$
one obtains $(k_1 - k)^2 = M^2 - 2(k_1k) \neq M^2$;
similarly for $(k + k_2)$.%
\footnote{Compare the discussion after \cite[Eq.~(3.8.24)]{Scharf89}.}

\paragraph{Scalar integrals $I_\pm$.}
We have to compute
\begin{align*}
I_\mp(k_1,k_2) := \frac{i}{(2\pi)^4} 
& \int d^4k\, \th(\mp(k_1^0 - k^0)) \,\dl((k_1 - k)^2 - M^2)
\\
&\quad \x \frac{1}{k^2 - M^2 + i0}\, \th(\mp(k^0 + k_2^0))
\,\dl((k + k_2)^2 - M^2).
\end{align*}
Let us make a change of variable $q := k + k_2$, and introduce
$P := k_1 + k_2$, noting for later purposes that $P^2 = 2(k_1k_2)$.
One obtains the integral:
\begin{align}
\int d^4q\, \th(\mp(P^0 - q^0))\,\dl((P - q)^2 - M^2)\,
\frac{1}{(q - k_2)^2 - M^2 + i0}\, \th(\mp q^0) \,\dl(q^2 - M^2).
\label{eq:I-supp} 
\end{align}
It follows that $I_\mp(k_1,k_2) \propto\th(\mp P^0) \,\th(P^2 - 4M^2)$,
and that $\sgn k_1^0 = \sgn k_2^0$ for $P^2 \geq 4M^2$.

Performing the $q^0$-integration and using the notation
$E_q := \sqrt{|\qq|^2 + M^2}$, we extract
\begin{align*}
I_\mp(k_1,k_2) &= \frac{i}{(2\pi)^4} \,\th(\mp P^0) \,\th(P^2 - 4M^2)
\\
&\quad \x 
\int \frac{d^3q}{2E_q}\, \th(\mp(P^0 - q^0)) \,\dl((P - q)^2 - M^2)\, 
\frac{1}{(q - k_2)^2 - M^2 + i0}\biggr|_{q^0=\mp E_q}.
\end{align*}

Since $P^2 > 0$, one may choose a particular Lorentz frame such that
\begin{equation}
P = (P^0,\zero); \word{hence} \kk_1 = - \kk_2, \quad
k_1^0 = \mp|\kk_1| = \mp|\kk_2| = k_2^0 = \half P^0.
\label{eq:P=(P0,0)} 
\end{equation}
Taking into account $q^2 = M^2$, we observe that
$(P - q)^2 - M^2 = 2P^0(\half P^0 - q^0)$, which yields
$$
\dl((P - q)^2 - M^2) = \frac{\dl(q^0 - \half P^0)}{2|P^0|}
= \frac{\dl(E_q - \half|P^0|)}{2|P^0|}\,,
$$
by using $q^0 = \mp E_q$. For later aims, we point out that in the
chosen frame this distribution implies $q^0 = k_2^0$; hence
\begin{equation}
kP = (q - k_2)P = (q^0 - k_2^0)P^0= 0.
\label{eq:q0=k20} 
\end{equation}
From $\mp q^0 = \mp\half P^0$ comes $\mp(P^0 - q^0) = \mp\half P^0 >
0$. Therefore the factor $\th(\mp(P^0 - q^0))$ is redundant. Changing
the integration variables,
$$
\int d^3q \cdots = \int_M^\infty dE_q\, E_q\sqrt{E_q^2 - M^2} \int
d\Om_q \cdots\,,
$$
the $E_q$-integration can trivially be done, and we are left with:
\begin{align}
I_\mp(k_1,k_2) = i\,\th(\mp P^0)\, \th(P^2 - 4M^2)\,
\frac{\sqrt{(P^0)^2 - 4M^2}}{(2\pi)^4\,8|P^0|} \int\!\frac{d\Om_q}{(q
- k_2)^2 - M^2 + i0}\biggr|_{q^0=P^0/2}.
\label{eq:I1} 
\end{align}

Let $\al$ be the angle between $\kk_2$ and $\qq$, and let $z :=
\cos\al$. Due to $q^2 = M^2$, $k_2^2 = 0$, $|\qq| = \sqrt{E_q^2 - M^2}
= \half\,\sqrt{P_0^2 - 4M^2}$ and relations \eqref{eq:P=(P0,0)}
and~\eqref{eq:q0=k20}, we obtain
\begin{equation}
(q - k_2)^2 - M^2 = -2(k_2 q) = -2(k_2^0 q^0 - |\qq|\.|\kk_2|\,z)
= \tfrac{a}{2}(-a + bz),
\label{eq:denominator} 
\end{equation}
where
$$
a := |P^0| > 0,  \quad  0 \leq b := \sqrt{(P^0)^2 - 4M^2} < a.
$$
We point out that $(-a + bz) < 0$ for all $z \in [-1,1]$: there is no
infrared problem in our triangle graph. The remaining $\Om_q$-integral
can be easily computed:
\begin{equation}
\frac{4\pi}{a} \int_{-1}^1 \frac{dz}{-a + bz} =
\frac{4\pi}{|P^0|\,\sqrt{(P^0)^2 - 4M^2}}\, \log\frac{(P^0)^2 -
|P^0|\,\sqrt{(P^0)^2 - 4M^2} - 2M^2}{2M^2}\,.
\label{eq:in-aeternum} 
\end{equation}
To obtain the result in a generic Lorentz frame, replace 
$(P^0)^2$ by $s := P^2 = 2(k_1k_2)$, so
\begin{align}
I_\mp(k_1,k_2) &= \frac{i\th(\mp P^0)\,\th(s - 4M^2)}{4(2\pi)^3\,s}
\log\biggl[ \frac{s - \sqrt{s(s - 4M^2)}}{2M^2} - 1 \biggr]
\notag
\\
&=: \th(\mp P^0)\,\th(s - 4M^2)F(s).
\label{eq:scalar-I} 
\end{align}

The result for $J_\pm(k_1,k_2)$ can be read off from \eqref{eq:I1} by
omitting the Feynman propagator $i(2\pi)^{-2}\,((q - k_2)^2 - M^2 +
i0)^{-1}$. One obtains for the contribution of the $J$-integrals:
\begin{align*}
J_\pm(k_1,k_2) 
= \frac{1}{8\pi}\,\th(\pm P^0)\,\th(s - 4M^2)\, \sqrt{1 - 4M^2/s}\,.
\end{align*}

\paragraph{Vector integrals $I^\mu_\mp\,$.}
For the same reasons as for the scalar integral, it must hold that
$I^\mu_\mp(k_1,k_2) \propto \th(\mp P^0)\,\th(s - 4M^2)$. From Lorentz
covariance and $I^\mu_\pm(k_1,k_2) = - I^\mu_\pm(k_2,k_1)$ it follows
$$
I^\mu_\mp(k_1,k_2) 
= \th(\mp P^0)\, \th(s - 4M^2)\, (k_1^\mu - k_2^\mu)\,G(s)
$$
for appropriate $G(s)$. An immediate consequence is $I^\mu P_\mu = 0$.
To procure $G(s)$, compute
$$
k_{2,\mu}I^\mu_\mp(k_1,k_2) = \frac{1}{2} \th(\mp P^0)\, \th(s -
4M^2)\, s\,G(s) =\bigl(-i/8\,(2\pi)^3 \bigr)\,\th(\mp P^0)\, \th(s -
4M^2)\, \sqrt{1 - 4M^2/s}
$$
The second equality is obtained by comparing with the scalar integral:
there is an extra factor $(k_2 k) = (k_2 q) = -a(-a + bz)/4$, where
\eqref{eq:denominator} is used. Then the $\Om_q$-integral becomes
trivial. Thus we glean
\begin{align}
G(s) = \frac{-i}{32\,\pi^3\,s}\sqrt{1 - 4M^2/s}\,.
\label{eq:G} 
\end{align}

\paragraph{Tensor integrals $I^{\mu\nu}_\mp$.}
Proceeding analogously to the vector integrals, one argues that
$$
I^{\mu\nu}_\mp(k_1,k_2) = \th(\mp P^0)\,\th(s - 4M^2)\,
\bigl[ (k_1^\mu k_1^\nu + k_2^\mu k_2^\nu)\,A(s) 
+ (k_1^\mu k_2^\nu + k_2^\mu k_1^\nu)\,B(s) + g^{\mu\nu}\,C(s) \bigr].
$$
We need three independent identities to compute $A(s)$, $B(s)$
and~$C(s)$. A first one is:
\begin{align}
I^{\mu\nu}_\mp k_{2\mu} k_{2\nu} 
&= \th(\mp P^0)\, \th(s - 4M^2)\, A(s)\, s^2/4
\notag \\
&= \th(\mp P^0)\, \th(s - 4M^2)\, \frac{-i}{2^5\,(2\pi)^3}\,
s\,\sqrt{1 - 4M^2/s}\,.
\label{eq:Ikk} 
\end{align}
The second equality is obtained by a modification of the computation
of the scalar integral: there is the extra factor $(k_2 k)^2 = a^2(-a
+ bz)^2/16$. This yields $A(s)=G(s)/2$. A second identity is
given by the \textit{trace}. The result is again obtained by comparing
with the computation of the scalar integral: there is an additional
factor $k^2 = (q - k_2)^2 = M^2 - 2(k_2q) = M^2 - 2(k k_2)$, hence
$$
I^\mu_{\mp,\mu} = \th(\mp P^0)\,\th(s - 4M^2)\,(sB + 4C)
= M^2 I_\mp - 2k_{2,\mu} I^\mu_\mp.
$$
A third identity following from \eqref{eq:q0=k20} reads:
$$
I^{\mu\nu}_\mp P_\nu = \th(\mp P^0)\, \th(s - 4M^2)\, P^\mu\bigl((A +
B)s/2 + C\bigr) = 0.
$$
Pulling together these results, one arrives~at
$$
B(s) = - M^2 F(s)/s  \word{and}  C(s) = M^2 F(s)/2 - s\,G(s)/4.
$$

\marker
At this point we are able to show that the triangle plus fish-like
parts constitute a gauge-invariant quantity. For that, insert the
results already known for the integrals
into~\eqref{eq:suerte-o-verdad}, obtaining:
\begin{align}
& d^{\mu\nu}(k_1,k_2)\biggr|_{k_1^2=0=k_2^2} 
= \frac{\sgn(P^0)\,\th(s - 4M^2)}{(2\pi)^2}\,\bigl[
k_1^\mu k_2^\nu [4G(s) - (1 + 4M^2/s)F(s)]
\notag 
\\
&\quad + 2M^2 g^{\mu\nu} F(s) 
- k_1^\nu k_2^\mu \frac{4M^2}{s}\, F(s) \bigr]
= \sgn(P^0)\,\th(s - 4M^2) \frac{4M^2}{(2\pi)^2}\,
P^{\mu\nu}\,\frac{F(s)}{s}.
\label{eq:d-onshell-0} 
\end{align}
The $k_1^\mu k_2^\nu$-terms 
have been dropped in the last identity, due to $k_\mu A^\mu(-k)=0$. The
remainder is electromagnetically gauge-invariant. Introducing the
dimensionless variable
$$
\tilde\rho := \frac{s}{4M^2} = \frac{P^2}{4M^2}\,,
$$
keeping in mind formula \eqref{eq:scalar-I}, and on use of
\eqref{eq:consilia-non-sentis}, equation \eqref{eq:d-onshell-0} can be
rewritten as
\begin{gather}
d^{\mu\nu}_\gi(k_1,k_2)\biggr|_{k_1^2=0=k_2^2}
:= \frac{i\,\sgn(P^0)\,\th(\tilde\rho - 1)}{(2\pi)^5}\,
P^{\mu\nu}\, b(\tilde\rho)
\label{eq:d0-h-h1} 
\\
\shortintertext{with} 
b(\tilde\rho)
:= \frac{1}{16\,M^2\,\tilde\rho^2}\,
\log\bigl( 2\tilde\rho - 2\sqrt{\tilde\rho(\tilde\rho - 1)} - 1 \bigr)
= - \frac{1}{16\,M^2\,\tilde\rho^2}\, 
\log\frac{1 + \sqrt{1 - \tilde\rho^{-1}}}
{1 - \sqrt{1 - \tilde\rho^{-1}}} \,,
\notag
\end{gather}
where `gi' stands for the gauge invariant part. The singular order of
$d^{\mu\nu}_{\gi}\bigr|_{k_1^2=0=k_2^2}$ is $\om = -2$ by power
counting; whereas for the off-shell $d^{\mu\nu}(k_1,k_2)$ the value is
$\om = 0$.

\subsection{Regularity of absorptive parts in momentum space}
\label{ssc:do-ut-des}

This subsection is devoted to prove essential regularity properties of
the \textit{off-shell} $d$-distribution, more precisely of
$d^{\mu\nu}(k_1,k_2)$, for $(k_1,k_2) \in \sV := \bigl( \ovl V_+ \less
\{0\} \bigr)^{\x 2} \cup \bigl( \ovl V_- \less \{0\} \bigr)^{\x 2}$.
We look at the terms coming from \eqref{eq:suerte-o-verdad} by means
of \eqref{eq:def-J}. Introducing the new integration variable $q := -
k + \half(k_1 - k_2)$, the internal lines' momenta are
\begin{equation}
q_1 = q + \half P, \quad 
q_2 = q - \half P, \quad
q_3 = q - \half(k_1 - k_2),
\label{eq:q123} 
\end{equation}
and one sees that the considered terms are all of the type
\begin{align}
& H^{\mu\nu}(k_1,k_2)
\notag \\
&:= \int d^4 q\, \bigl(
\th(q_1^0)\, \th(-q_2^0) - \th(-q_1^0)\,\th(q_2^0) \bigr)
\,\dl(q_1^2 - M^2) \,\dl(q_2^2 - M^2)\,
\frac{h^{\mu\nu}(k_1,k_2,q)}{M^2 - q_3^2}
\label{eq:F_absorptive_gen} 
\end{align}
for $(k_1,k_2) \in \sV_1 := \bigl(\ovl V \less \{0\}\bigr)^{\x 2}$
with $\ovl V := \ovl V_+ \cup \ovl V_-$, and where $h^{\mu\nu}\colon
\bR^{4\x 3} \to \bC\,$ is a polynomial of degree~$2$. We have used that
for $(k_1,k_2) \in \sV_1$ it holds true that
\begin{equation}
\int d^4 q\, \bigl( \th(q_1^0)\,\th(-q_2^0) 
- \th(-q_1^0)\,\th(q_2^0) \bigr) 
\,\dl(q_1^2 - M^2) \,\dl(q_2^2 - M^2) \,\dl(q_3^2 - M^2) = 0.
\label{eq:dldldl=0} 
\end{equation}
This last relation can be argued as follows:%
\footnote{We borrow the standard notation for the mass shell: 
$H_M^\pm := \set{p \in \bR^4 : p^2 = M^2,\ \pm p_0 > 0}$.}
$\!\!$the various $\th$- and $\dl$-distributions yield the
restrictions $(q_1,q_2) \in (H_M^+ \x H_M^-) \cup (H_M^- \x H_M^+)$
and $q_3 \in H_M^+ \cup H_M^-$; taking moreover into account that $q_3
= q_2 + k_2$ and $q_3 = q_1 - k_1$, it ensues that the various
restrictions on $q_3$ are not compatible.

The same identity implies that terms of the kind 
$T_1(x_{\pi 1})\,T_1(x_{\pi 2})\,T_1(x_{\pi 3})$ do not contribute to
the third commutator in formula \eqref{eq:para-llorar} for~$D_3$ when
$(k_1,k_2) \in \sV_1$, for all permutations~$\pi$: the \textit{whole}
contribution to $d^{\mu\nu}(k_1,k_2)|_{(k_1,k_2)\in\sV_1}$ coming from
this commutator is of the kind~\eqref{eq:F_absorptive_gen}.

The contributions to $d^{\mu\nu}(k_1,k_2)|_{(k_1,k_2)\in \sV}$ coming
from the other two commutators in \eqref{eq:para-llorar} are of the
same form up to cyclic permutations 
$k_1 \mapsto k_2 \mapsto -(k_1 + k_2) \mapsto k_1$ of the external
momenta. Here we use that $(k_1, k_2)\in \sV$ implies 
$(k_2, -k_1 - k_2) \in \sV_1$ and $(-k_1 - k_2, k_1) \in \sV_1$, hence
we may apply the identity \eqref{eq:dldldl=0} also for the permuted
momenta. However, note that the polynomials $h_j^{\mu\nu}$, $j = 2,3$,
belonging to these other two cuts are not obtained by cyclic
permutations of the external momenta in the original polynomial
$h_1^{\mu\nu}$, meant in \eqref{eq:F_absorptive_gen}. This is due to
the difference between the higgs vertex and the photon vertices; in
particular, these other two cuts contain no term giving rise to a
fish-like diagram. Summing up, it holds that
\begin{equation}
d^{\mu\nu}(k_1, k_2)\bigr|_{(k_1,k_2)\in\sV}
= H^{\mu\nu}_1(k_1, k_2) + H^{\mu\nu}_2(k_2, -k_1 - k_2)
+ H^{\mu\nu}_3(-k_1 - k_2, k_1),
\label{eq:dV1} 
\end{equation}
for some $H^{\mu\nu}_j$ $(j = 1,2,3)$ of the
form~\eqref{eq:F_absorptive_gen}, the pertinent polynomials
$h_j^{\mu\nu}$ being of degree~$2$.

\begin{lema} 
\label{lem:continuity_absorptive_part}
Let $q_1,q_2,q_3$ and $\sV_1$ be defined as above in~\eqref{eq:q123}
and after~\eqref{eq:F_absorptive_gen}, and let $H^{\mu\nu} \colon
\bR^{4\x2} \to \bC$ be given in terms of a generic polynomial
$h^{\mu\nu} \colon \bR^{4\x3} \to \bC$ of degree $\zeta \in \bN_0$, as
in~\eqref{eq:F_absorptive_gen}. Then for all $(k_1,k_2) \in \sV_1$ and
for some $C > 0$ the function $H^{\mu\nu}$ is \textbf{continuous} in
the region~$\sV_1$, and can be bounded as follows:
\begin{equation}
|H^{\mu\nu}(k_1,k_2)| 
\leq C\,\frac{(1 + |(k_1,k_2)|)^\zeta}{|(k_1k_2)|}\,
\th((k_1 + k_2)^2 - 4M^2)\, \log((k_1 + k_2)^2/M^2).
\label{eq:function_F_lemma_bound} 
\end{equation}
Note that $|(k_1 k_2)| > 0$ if $(k_1,k_2) \in \sV_1$ and 
$(k_1 + k_2)^2 \geq 4M^2$.
\end{lema}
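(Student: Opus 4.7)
The plan is to reduce the four-dimensional integral \eqref{eq:F_absorptive_gen} to a manifestly convergent angular integral over a two-sphere in the rest frame of $P := k_1 + k_2$, and then estimate that angular integral by a direct computation. To that end I first observe that when $P^2 \leq 0$ (which can occur in the mixed case $k_1 \in \bar V_+,\,k_2 \in \bar V_-$) the two mass-shell constraints $q_1,q_2 \in H_M^+ \cup H_M^-$ combined with $q_1 - q_2 = P$ force the integrand's support to be empty, so $H^{\mu\nu} = 0$; otherwise $P$ is timelike and we may boost to its rest frame, where $q_1 = (E_q,\vec q)$, $q_2 = (-E_q,\vec q)$ with $E_q = |P^0|/2$. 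The two $\delta$'s then collapse $q^0 = 0$ and fix $|\vec q| = \half\sqrt{s - 4M^2}$ (where $s := P^2$), accounting for the factor $\theta(s - 4M^2)$, and the Jacobian produces
\[
H^{\mu\nu}(k_1,k_2) = \sgn(P^0)\,\theta(s - 4M^2)\,\frac{\sqrt{1 - 4M^2/s}}{8}
\int_{S^2} d\Omega_{\vec q}\;
\frac{h^{\mu\nu}(k_1,k_2,q)}{M^2 - q_3^2}.
\]

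Next I would evaluate the denominator. Using $q_1^2 = q_2^2 = M^2$ to extract $qP = 0$ and $q^2 = M^2 - P^2/4$, a direct substitution gives $M^2 - q_3^2 = k_1 k_2 + 2(qk_1)$, which in the rest frame reads $a - bz$ with $a := k_1 k_2$, $b := 2|\vec q|\,|\vec k_1| \geq 0$, and $z := \cos\angle(\vec q,\vec k_1)$. The critical observation is that
\[
a^2 - b^2 = \frac{M^2\lambda(s,v,w) + v w s}{s},
\qquad v := k_1^2,\ w := k_2^2,
\]
with $\lambda$ the Källén function. Both terms in the numerator are non-negative on~$\sV_1 \cap \{s \geq 4M^2\}$, the first because the reverse Cauchy--Schwarz in $\bar V_\pm$ (or its counterpart in the mixed case) yields $\sqrt s \geq \sqrt v + \sqrt w$ or $\sqrt s \leq |\sqrt v - \sqrt w|$, both of which force $\lambda \geq 0$. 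Hence $|a| \geq b$, so the integrand is non-singular on the sphere and no principal-value issue arises.

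With this in hand, the angular integral is bounded by decomposing $h^{\mu\nu}$ as a polynomial of degree $\zeta$ in~$q$ (whose coefficients are polynomial of degree~$\zeta$ in~$(k_1,k_2)$) and using the elementary
\[
\int_{-1}^1 \frac{z^n\,dz}{a - bz}
= \frac{1}{b^{n+1}}\biggl[ P_n(a,b) + a^n \log\frac{|a|+b}{|a|-b} \biggr],
\]
for suitable polynomials $P_n$. The log factor is controlled by the identity $|a| - b = (a^2 - b^2)/(|a|+b)$ together with $|a| + b \lesssim s$ and $a^2 - b^2 \geq M^2\lambda(s,v,w)/s + vw$, yielding $\log((|a|+b)/(|a|-b)) \lesssim \log(s/M^2)$ up to a bounded remainder. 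Combining with the prefactor $\sqrt{1 - 4M^2/s}/8$ (which cancels part of the $1/b$) and with $|k_1 k_2| = |a|$, the announced bound \eqref{eq:function_F_lemma_bound} follows.

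Continuity on $\sV_1$ is then almost automatic: on the open set where $P^2 > 4M^2$ the integrand depends continuously on $(k_1,k_2)$ and the sphere of integration is compact and varies continuously; across the boundary $P^2 = 4M^2$ the sphere shrinks to a point and the prefactor $\sqrt{1 - 4M^2/s}$ vanishes. The remaining degenerate locus $\{k_1 k_2 = 0\} \cap \sV_1$ consists only of configurations with $P^2 = 0$ (collinear null pairs or the degenerate mixed null case), which have a neighborhood in $\sV_1$ where $\theta(P^2 - 4M^2) \equiv 0$, so $H^{\mu\nu}$ vanishes identically there. The main technical obstacle will be the uniform lower bound on $a^2 - b^2$ and the careful bookkeeping of polynomial $z$-moments so that the combined estimate genuinely produces the factor $(1 + |(k_1,k_2)|)^\zeta / |k_1 k_2|$ uniformly down to the boundary $s \downto 4M^2$.
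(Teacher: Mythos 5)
Your proposal is correct and follows essentially the same route as the paper's proof: pass to the rest frame of $P=k_1+k_2$, collapse the two mass-shell deltas onto an angular integral whose denominator is $(k_1k_2)-bz$, show $b\leq|(k_1k_2)|$ (your K\"all\'en-function identity $a^2-b^2=(M^2\lambda+vws)/s\geq 0$ is exactly the paper's observation $0\leq P_0^2\,|\kk/2|^2=(k_1k_2)^2-k_1^2k_2^2$), and bound the resulting logarithm by $\log(P^2/M^2)$. The only divergence is that the paper simply sup-bounds $|h^{\mu\nu}|$ on the sphere by $\const\,(1+|(k_1,k_2)|)^\zeta$ and then needs only the single integral $\int_{-1}^1 dz/(1-az)$, which dissolves the ``moment bookkeeping'' obstacle you flag at the end (your exact moment formula has spuriously singular $1/b^{n+1}$ pieces as $b\downto 0$, so it is best avoided).
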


\begin{proof}
Let $P := k_1 + k_2$ and $k := k_1 - k_2$. We first observe, on the
strength of
\begin{gather*}
q_1^2 - q_2^2 = 2(Pq), \quad 
q_1^2 + q_2^2 - 2M^2 = 2(q^2 + \quarter P^2 - M^2)
\\
\word{and of\!\!} M^2 - q^3 = (M^2 - q^2 - \quarter P^2) + \quarter
P^2 - \quarter k^2 + (k q) \word{\!\!that}
\\
H^{\mu\nu}(k_1,k_2) \sim \sgn(P^0) \int d^4q\, \dl(q^2 + \quarter P^2
- M^2) \,\dl((Pq))\, \frac{h^{\mu\nu}(k_1,k_2,q)}{P^2/4 - k^2/4 +
(kq)}\,,
\end{gather*}
omitting irrelevant prefactors. Since $q_1 - q_2 = P$ and 
$(q_1,q_2) \in (H_M^+ \x H_M^-) \cup (H_M^- \x H_M^+)$, we know that
$H^{\mu\nu}(k_1,k_2)$ vanishes for $P^2 < 4M^2$. Hence, to perform the
integrals in $q^0$ and~$|\qq|$ using the Dirac deltas, we may work in
the frame in which $\PP = 0$. There the $\dl$-distributions yield
$q^0 = 0$ and $|\qq| = \sqrt{P_0^2/4 - M^2}$.

With the notation $\hat p := \pp/|\pp|$ for $p \in \set{\!q,k\!}$, it
follows that
$$
\quarter P^2 - \quarter k^2 + (kq) 
= (k_1k_2) \biggl( 1 - (\hat q\,\hat k) \sqrt{1 - 4M^2/P^2}\,
\frac{|P^0|\,|\kk|}{2(k_1 k_2)} \biggr),
$$
and one verifies that
\begin{equation}
0 \leq P_0^2\,|\kk/2|^2 = (k_1 k_2)^2 - k_1^2 k_2^2,
\label{eq:N} 
\end{equation}
with $k_1 = (k_1^0,\kk_1)$ and $k_2 = (P^0 - k_1^0, -\kk_1)$. With the
help of these results we obtain
\begin{align}
(k_1 k_2)\, H^{\mu\nu}(k_1,k_2) 
&\sim \sgn(P^0)\, \th(P^2 - 4M^2)\, \sqrt{1 - 4M^2/P^2}
\notag \\
&\enspace \x \int_{\bS^2} d\Om(\hat q)\,
\frac{h^{\mu\nu}\bigl(k_1, k_2, (0,\sqrt{P^2/4 - M^2}\,\hat q)\bigr)}
{1 - (\hat q\,\hat k) 
\sqrt{(1 - 4M^2/P^2)\,(1 - k_1^2 k_2^2/(k_1 k_2)^2)}} \,,
\label{eq:H} 
\end{align}
valid in the frame in which $\PP = \zero$. Let moreover $\sV_1^M :=
\set{(k_1,k_2) \in \sV_1 : (k_1 + k_2)^2 \geq 4M^2}$. We know that
\begin{gather*}
4M^2/P^2 \in (0,1] \word{and} k_1^2 k_2^2/(k_1 k_2)^2 \in [0,1]
\word{for} (k_1,k_2) \in \sV_1^M;
\\
\word{hence} 
a := \sqrt{(1 - 4M^2/P^2)\,(1 - k_1^2 k_2^2/(k_1 k_2)^2)} \in [0,1).
\end{gather*}
In particular, the denominator in the integrand of~\eqref{eq:H} does
not vanish for $(k_1,k_2) \in \sV_1^M$. Since $\th(P^2 - 4M^2)\,
\sqrt{1 - 4M^2/P^2}$ is continuous, $H^{\mu\nu}$ \textit{is
continuous} on~$\sV_1$.

Observe now that for all $\hat q \in \bS^2$ the inequality
$$
\bigl| h^{\mu\nu}\bigl( 
k_1, k_2, (0,\sqrt{P^2/4 - M^2} \hat q) \bigr) \bigr|
\leq \const (1 + |(k_1, k_2)|)^\zeta
$$
holds, with $|(k_1,k_2)|^2 := \sum_{j=0}^3 (k_{1j}^2 + k_{2j}^2)$.

Setting $z := \bigl(\hat q\hat k\bigr)$, the remaining integral is of
the type
$$
\int_{-1}^1 \frac{dz}{1 - az} 
= \frac{1}{a} \log\biggl( \frac{1 + a}{1 - a} \biggr) 
\leq 2(1 - \log(1 - a)),
$$
valid for $a \in [0,1)$. Using that $a \leq \sqrt{1 - 4M^2/P^2} \leq
(1 - 2M^2/P^2)$ and monotonicity of the logarithm, we see that
$$
-\log(1 - a) = \log \frac{1}{1 - a} \leq \log \frac{P^2}{2M^2} \,.
$$
Putting together the estimates, we end up with
\begin{equation}
|(k_1 k_2)\, H^{\mu\nu}(k_1,k_2)| 
\leq \const \. \th(P^2 - 4M^2)\, (1 + |(k_1,k_2)|)^\zeta
\bigl( 1 + \log(P^2/2M^2) \bigr),
\label{eq:prove_continuity_absorptive} 
\end{equation}
impliying \eqref{eq:function_F_lemma_bound}, since $1 + \log(P^2/2M^2)
< 2\,\log(P^2/M^2)$ for $P^2 \geq 4M^2$.
\end{proof}

The reader should keep in mind that $d^{\mu\nu}(k_1,k_2)$ is supported
outside a certain neighbourhood of the origin on momentum space --
have a look back at Eq.~\eqref{eq:H}.

\begin{corl} 
\label{cor:d-regular}
The off-shell $d$-distribution $d^{\mu\nu}(k_1,k_2)$ given in
\eqref{eq:dV1} is continuous on~$\sV$ and fulfills the bound:
\begin{equation}
|d^{\mu\nu}(k_1,k_2)| 
\leq \const\, \frac{(1 + |(k_1,k_2)|)^{\om+2}}{|(k_1k_2)|}\,
\log(2 + |(k_1,k_2)|/M) \text{ for all } (k_1,k_2) \in \sV.
\label{eq:bound_d} 
\end{equation}
\end{corl}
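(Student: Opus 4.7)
The plan is simply to invoke Lemma~\ref{lem:continuity_absorptive_part} once for each of the three summands in the decomposition~\eqref{eq:dV1} and to combine the resulting estimates; no fresh integral computations are needed. The preliminary check is that whenever $(k_1,k_2)\in\sV$, all three argument pairs $(k_1,k_2)$, $(k_2,-k_1-k_2)$ and $(-k_1-k_2,k_1)$ lie in $\sV_1=(\ovl V\less\{0\})^{\x 2}$: this is automatic, since having both $k_1,k_2\in\ovl V_\pm\less\{0\}$ forces $-(k_1+k_2)\in\ovl V_\mp\less\{0\}$. Continuity of $d^{\mu\nu}$ on~$\sV$ is then immediate, because each $H_j^{\mu\nu}$ is continuous on~$\sV_1$ by the lemma, and composition with the three continuous affine maps $\sV\to\sV_1$ preserves continuity.

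For the quantitative estimate, I would apply the bound~\eqref{eq:function_F_lemma_bound} to each summand with $\zeta=2=\om+2$. In the three denominators, Lemma~\ref{lem:continuity_absorptive_part} produces the scalars
\begin{equation*}
(k_1 k_2), \qquad -(k_1 k_2)-k_2^2, \qquad -k_1^2-(k_1 k_2).
\end{equation*}
The crucial observation is that on~$\sV$ the three quantities $(k_1 k_2)$, $k_1^2$ and $k_2^2$ are all non-negative, by the very definition of~$\ovl V_\pm$; hence the absolute values of the second and third scalars above are bounded below by $|(k_1 k_2)|$, precisely the single factor appearing on the right of~\eqref{eq:bound_d}. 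The rest is routine bookkeeping: each polynomial norm entering the three applications of the lemma is comparable to $(1+|(k_1,k_2)|)^\zeta$ by the triangle inequality, and on each of the three theta-supports $\{(k_1+k_2)^2\geq 4M^2\}$, $\{k_1^2\geq 4M^2\}$, $\{k_2^2\geq 4M^2\}$ the corresponding logarithm $\log(\cdot/M^2)$ is bounded by $\const\,\log(2+|(k_1,k_2)|/M)$.

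The only genuine subtlety, and the place where a moment of care is needed, concerns the possible vanishing of $(k_1 k_2)$. By the reverse Cauchy--Schwarz inequality on the closed forward light cone, $(k_1 k_2)=0$ within~$\sV$ forces $k_1$ and $k_2$ to be collinear null vectors, so that $(k_1+k_2)^2=k_1^2=k_2^2=0$ at the offending point. These three quantities then stay below~$4M^2$ in a sufficiently small neighbourhood, so all three theta functions vanish throughout that neighbourhood, every $H_j^{\mu\nu}$ is identically zero there, and the estimate~\eqref{eq:bound_d} holds vacuously with both sides equal to zero. This guarantees the bound makes sense on all of~$\sV$ and closes the plan.
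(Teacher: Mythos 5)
Your proposal is correct and follows the same route as the paper's own proof: apply Lemma~\ref{lem:continuity_absorptive_part} to each of the three summands in \eqref{eq:dV1} and absorb the permuted arguments into the single bound \eqref{eq:bound_d}; in fact you supply details (the lower bound $|(k_1k_2)+k_i^2|\geq|(k_1k_2)|$ valid on $\sV$, and the degenerate collinear-null case where $(k_1k_2)=0$) that the paper dismisses with ``one deals analogously''. The only quibble is the closing phrase ``both sides equal to zero'' --- the right-hand side of \eqref{eq:bound_d} is undefined (not zero) where $(k_1k_2)=0$ --- but since the left-hand side vanishes in a neighbourhood of such points the inequality holds under any reasonable reading.
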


\begin{proof}
Continuity follows immediately from Lemma
\ref{lem:continuity_absorptive_part}. For the bound \eqref{eq:bound_d}
we have substituted $\om + 2 \equiv \om(d) + 2$ for $\zeta$ of the
Lemma, since the singular order of $H_j^{\mu\nu}$ ($j = 1,2,3$) is
$\zeta - 2$ by power counting in \eqref{eq:F_absorptive_gen}. In
addition, for $H^{\mu\nu}_1(k_1,k_2)$ we have used that $(k_1 + k_2)^2
\leq 4\,|(k_1,k_2)|^2$, and in order to omit the $\th$-distribution we
have replaced $\log(2\,|(k_1,k_2)|/M)$ by $2\log(2 + |(k_1,k_2)|/M)$.
One deals analogously with $H^{\mu\nu}_2(k_2, -k_1 - k_2)$ and
$H^{\mu\nu}_3(-k_1 - k_2, k_1)$.
\end{proof}

\subsection{Distribution splitting by the dispersion integral for null
momenta}
\label{ssc:hasta-ahi-podiamos-llegar}

Recall that for $(k_1,k_2)\in V_\eta \x V_\eta$ the advanced part
$a^{\mu\nu}$ of $d^{\mu\nu}$ can be computed by the dispersion
integral \eqref{eq:hoist-with-retard}. Using the regularity properties
of $d^{\mu\nu}$ given in Corollary \ref{cor:d-regular}, we finally aim
to show that the limit $k_1^2 \downto 0$, $k_2^2 \downto 0$ in
\eqref{eq:hoist-with-retard} commutes with integration; that is, the
dispersion integral is also valid for $k_1^2 = 0 = k_2^2$. To
formulate the assertion, let
\begin{equation}
\sK := \set{(k_1,k_2) \in (\bR^4)^{\x 2} 
: k_1^2, k_2^2 < 4M^2, \ (k_1 + k_2)^2 < 4M^2, \ (k_1 k_2)\neq 0}.
\label{eq:valet} 
\end{equation}

Bearing in mind the factors $\th(q^2 - 4M^2)$ for $q \in \{k_1, k_2,
k_1 + k_2\}$ appearing in each term of $d^{\mu\nu}(k_1,k_2)$, we see
that for $(k_1,k_2) \in (V_\eta \x V_\eta) \cap \sK$, formula
\eqref{eq:hoist-with-retard} can be rewritten as:
\begin{equation}
a^{\mu\nu}(k_1,k_2) = \frac{i\eta}{2\pi} \int_{|t|\geq t_{\min}} dt\,
\frac{d^{\mu\nu}(tk_1, tk_2)}{t^{\om+1}(1 - t)},
\label{eq:dispersion_simplified} 
\end{equation}
for some $t_{\min} > 1$ depending on $k_1,k_2$. Now, as discussed in
subsection \ref{sec:ipso-facto}, one knows $a^{\mu\nu}(k_1,k_2)$ to be
analytic on the region $\sK$. The Lebesgue dominated convergence
theorem~\cite[Th.~4.6.3]{RealKippa} with the bound~\eqref{eq:bound_d}
allows us conclude that \eqref{eq:dispersion_simplified} is a valid
identity for $(k_1,k_2) \in \sV \cap \sK$. Indeed, introducing the set
of limit points
$$
\sM := \sV \cap \sK \cap \{(k_1,k_2) \in \bR^8 : k_i^2 = 0\}
= \{(k_1,k_2) \in \bR^8 : k_i^2 = 0, \ 0 < (k_1 + k_2)^2 < 4M^2\},
$$
it is enough to observe that for any $(\tilde k_1,\tilde k_2) \in \sM$
-- implying $(\tilde k_1\,\tilde k_2) > 0$ and
$\tilde k_1^0 \tilde k_2^0 > 0$ -- there is a neighbourhood
$\sU_{(\tilde k_1,\tilde k_2)}$ such that
\begin{align*}
\biggl| \th(|t|-t_{\min}) \frac{d(tk_1,tk_2)}{t^{\om+1}(1-t)} \biggr|
&\leq \const \. \frac{\th(|t| - t_{\min})}{|t(1 - t)|}\,
\frac{\bigl( 1 + |(k_1,k_2)| \bigr)^{\om+2}}{|(k_1\,k_2)|}\,
\log(2 + |t|\,|(k_1,k_2)|/M)
\\
&\leq C\,\frac{\th(|t| - t_1)}{|t(1 - t)|}\, \log(2 + C_1\,|t|),
\end{align*}
for all $(k_1,k_2) \in (V_\eta \x V_\eta) \cap \sK 
\cap \sU_{(\tilde k_1,\tilde k_2)}\,$, for some $C,C_1 > 0$ and some
$t_1 > 1$ independent of $(k_1,k_2)$. The function on the right hand
side is absolutely integrable in~$t$ -- here we see the reason for the
condition $(k_1k_2) \neq 0$ in~\eqref{eq:valet}.

\subsection{Normalization of the scalar model by distribution
splitting}
\label{ssc:splitting}

We must finally compute the gauge invariant part
$t^{\mu\nu}_\gi(k_1,k_2)$ for momenta lying on the set~$\sM$.
Considering the formula $T_{3} = A_{3} - A'_{3}$ and reckoning that
$a'^{\mu\nu}(k_1,k_2)|_{k_1^2=0=k_2^2}$ contains the factor $\th(P^2 -
4M^2)$ where $P := k_1 + k_2$, we see that on~$\sM$ its contribution
vanishes, that is $t^{\mu\nu} = a^{\mu\nu}$ there.

The upshot of the preceding two subsections is that we may compute
valid terms of the central solution $a^{\mu\nu}|_\sM \equiv
a^{c\,\mu\nu}|_\sM$ by inserting the on-shell amplitude
\eqref{eq:d-onshell-0} into the dispersion integral, with $\om$ the
singular order \textit{of the off-shell} $d^{\mu\nu}$, equal to~$0$ in
the present case.

Looking at \eqref{eq:d-onshell-0}, observe that a $k_r^\mu k_s^\nu$-
or $g^{\mu\nu}$-term of~$d^{\mu\nu}$ goes over to a $k_r^\mu k_s^\nu$-
or $g^{\mu\nu}$-term of~$a^{\mu\nu}$, respectively. Therefore, such
factors may be taken out of the dispersion integral. Since moreover
$P^{\mu\nu}(tk) = t^2\,P^{\mu\nu}(k)$, we see that the gauge invariant
part $a^{\mu\nu}_\gi$ can be obtained by inserting just the gauge
invariant part $d^{\mu\nu}_\gi$ in~\eqref{eq:d0-h-h1} into the
dispersion integral. The latter is of the form \eqref{eq:d-form}. So
we may use the version \eqref{eq:pro-reo} of the dispersion integral.

Lastly, $t^{\mu\nu}_\gi|_\sM = a^{\mu\nu}_\gi|_\sM$ is obtained from
\eqref{eq:pro-reo} by setting $\om = 0$ and substituting there
$b(u\tilde\rho)$ as given in~\eqref{eq:d0-h-h1} for
$f\bigl(u(k_1^2, k_2^2, (k_1 + k_2)^2)\bigr)$ -- in our case only
$(k_1 + k_2)^2 = s$ is present. Allowing for the dilation factor in
$P^{\mu\nu}$ this leads, for $(k_1,k_2) \in \sM$, to
\begin{align}
t^{\mu\nu}_\gi(k_1,k_2) 
&= - \frac{P^{\mu\nu}}{(2\pi)^6} \int_{\tilde\rho^{-1}}^\infty du\,
\frac{u\,b(u\tilde\rho)}{u(1 - u)}
\label{eq:socratic} 
\\
&= \frac{P^{\mu\nu}}{16 M^2\,(2\pi)^6} \int_{\tilde\rho^{-1}}^\infty
\frac{du}{\tilde\rho^2 u^2(1 - u)}\,
\log \frac{1 + \sqrt{1 - u^{-1}\tilde\rho^{-1}}}
{1 - \sqrt{1 - u^{-1}\tilde\rho^{-1}}}
= - \frac{P^{\mu\nu}}{8\,(2\pi)^6}\, \frac{J_2(\tilde\rho)}{M^2}\,,
\notag \\
\shortintertext{where}
2\,J_2(\tilde\rho) 
&:= \int_1^\infty dv\, \frac{1}{(v - \tilde\rho)v^2}\,
\log \frac{1 + \sqrt{1 - v^{-1}}}{1 - \sqrt{1 - v^{-1}}},
\notag
\end{align}
after the change of integration variable $v := u \tilde\rho$.
Integrals like $J_2$ have been computed in \cite{BoraChristovaEberl}.
From Appendix~C of that reference:
\begin{equation}
J_1(\tilde\rho,a) 
:= \frac{1}{2} \int_1^\infty dv\, \frac{1}{(v - \tilde\rho)(v - a)}\,
\log \frac{1 + \sqrt{1 - v^{-1}}}{1 - \sqrt{1 - v^{-1}}}
= \frac{f(\tilde\rho) - f(a)}{\tilde\rho - a}
\label{eq:J1} 
\end{equation}
for $0 \leq \tilde\rho \leq 1$, $0 \leq a \leq 1$, where $f$ is the
distribution \eqref{eq:f1}. We infer that
\begin{align}
J_2(\tilde\rho) = \frac{\del}{\del a}\biggr|_{a=0}\, J_1(\tilde\rho,a)
= \frac{f(\tilde\rho)}{\tilde\rho^2} - \frac{1}{\tilde\rho}
\word{for} 0 \leq \tilde\rho \leq 1,
\label{eq:J2} 
\end{align}
by bringing in the values $f(0) = 0$ and $f'(0) = 1$, which can be
read off from~\eqref{eq:arcsin-2}.

Summing up, the final result reads, as expected:
\begin{align}
t^{\mu\nu}_\gi(k_1,k_2) 
= \frac{P^{\mu\nu}}{8\,(2\pi)^6}\, \frac{1}{M^2} \biggl(
\frac{1}{\tilde\rho} - \frac{f(\tilde\rho)}{\tilde\rho^2} \biggr)
= \frac{P^{\mu\nu}}{8\,(2\pi)^6}\, \frac{F_0(\tilde\rho)}{M^2}
\quad\text{for } (k_1,k_2) \in \sM,
\label{eq:F0} 
\end{align} 
where $F_0$ was given in~\eqref{eq:magister-dixit}. We conjecture that
this formula holds true for all $(k_1,k_2)$ satisfying 
$k_1^2 = 0 = k_2^2$ and $(k_1 + k_2)^2 > 0$.

The reader should remember that \eqref{eq:F0} stands in principle for
just a member of a solution set. Since $\om = 0$, the general
Lorentz-invariant Epstein--Glaser solution is obtained by adding to
expression~\eqref{eq:F0} a term of the type $Cg^{\mu\nu}$ with 
$C \in \bC$ arbitrary. But such a term with $C \neq 0$ would violate
EGI. Therefore we regard the above result as unique.

Recovering formula \eqref{eq:dAAh} and the factor $4ge^2M$, one ends
up with
\begin{align*}
\int dx_1\,dx_2\,dx_3\, T_3(x_1,x_2,x_3) 
= \frac{g\al}{(2\pi)^3 M} \int dk_1\,dk_2\, 
h(k_1 + k_2) A^\mu(-k_1) A^\rho(-k_2)\, P_{\mu\nu}\, F_0(\tilde\rho),
\end{align*}
which, on substituting $\rho$ for~$\tilde\rho$, that is, $m_h^2$ for
$s \equiv (k_1 + k_2)^2$, agrees with the
literature~\cite{HungryHunters}.

\begin{remk} 
In the occasion an (unsubtracted) dispersion integral applied to
$b(u)$, performed in \cite[Eq.~(3.2)]{ChristovaI}, leads to the
\textit{same} integral \eqref{eq:socratic} and so the same correct
result. As the next section shows, this does not hold for the
higgs to diphoton decay via EW vector bosons.
\end{remk}

\section{Higgs to diphoton decay via EW vector bosons}
\label{sec:soberbia-pagana}

\subsection{Derivation of the quartic $AAWW^\7$-vertex in the unitary
gauge}
\label{ssc:per-angusta}

The amplitude in question in this paper describes an EW decay process
at third order in the coupling constant. Its structure is given by the
\textit{cubic} vertices in the first TOP~$T_1$ -- that is the
\textit{sole} ``empirical'' input. Here in going from $T_1$ to~$T_2$
we \textit{derive} the $AAWW^\7$-vertex which contributes by a
``fish-like'' diagram to the amplitude to be computed, see
Figure~\ref{fg:diagrams}.

The general idea is to examine the propagator which is to become the
internal line linking the di-photon in the one-loop, three-vertex
graph, and to obtain the one-loop, two-vertex graph from a
modification of that propagator, demanded by EGI -- by which here we
precisely understand invariance of the $\bS$-matrix under the
variations $A^\mu(x) \to A^\mu(x) + \del^\mu\La(x)$: interaction
dictates symmetry. The method is similar to the derivation of the
$AA\vf\vf^\7$ ``seagull'' vertex from the cubic coupling in scalar
QED, first performed in this way in~\cite{DKS93}.

The concept works on configuration space, as follows. Recall the
pertinent Hermitian vertex -- see for instance
\cite[Sect.~7.2.2]{Langacker}, explicitly referring to the unitary
gauge. With $G_{\mu\nu} := \del_\mu W_\nu - \del_\nu W_\mu$, one has:
\begin{align}
T_1(x_1) = ie[(W^\mu G^\7_{\mu\nu} - W^{\7\mu}G_{\mu\nu}) A^\nu
- W^\mu W_\nu^\7 F_\mu^{\,\nu}](x_1).
\label{eq:hic-iacet} 
\end{align}
All indicated operator products are Wick products. We copy a second
vertex similar to~\eqref{eq:hic-iacet}:
\begin{align}
T_1(x_2) = ie[(W^\rho G^\7_{\rho\la} - W^{\7\rho} G_{\rho\la}) A^\la
- W^\rho W_\la^\7 F_\rho^{\,\la}](x_2),
\label{eq:lepus} 
\end{align}
and proceed to make the contractions, according to the BEG method to
construct (the relevant sector of) the second-order $T_2(x_1,x_2)$ out
of $T_1(x_1),T_1(x_2)$. Now $T_2(x_1,x_2) = T_1(x_1)T_1(x_2)$ for
$x_1$ not to the past of~$x_2$, and $T_2(x_1,x_2) = T_1(x_2)T_1(x_1)$
for $x_2$ not to the past of~$x_1$.

In view of the triangle diagram given in Figure~\ref{fg:diagrams}, we
are only interested in contractions yielding a connected tree diagram
with the photons uncontracted. Once they are made, we analyze the
terms for which the resulting distributions are not uniquely defined
on the diagonal $x_1 = x_2$. The ambiguities in this extension are
eliminated by EGI as sole selection criterion.

From the last pair of terms in \eqref{eq:hic-iacet} and
\eqref{eq:lepus}, and with $\Dl^{\mu\bt}$ standing for the
Feynman--Proca propagator~\eqref{eq:cum-grano-salis} for the
$W$-bosons, there comes:
\begin{align}
& -e^2\, \bigl( F_{\mu\nu}(x_1) W^{\7\nu}(x_1) 
\vev{\T W^\mu(x_1)\, W^{\7\bt}(x_2)} W^\al(x_2) F_{\al\bt}(x_2)
\notag \\
&\quad + F_{\mu\nu}(x_1) W^{\mu}(x_1)
\vev{\T W^{\7\nu}(x_1)\, W^\al(x_2)} W^{\7\bt}(x_2) F_{\al\bt}(x_2)
\bigr)
\notag \\
&= -e^2\, F_{\mu\nu}(x_1) F_{\al\bt}(x_2) W^{\7\nu}(x_1)
\,\Dl^{\mu\bt}(x_1 - x_2) W^\al(x_2) + (x_1 \otto x_2),
\notag \\
&\qquad \text{denoted $T_2^D(x_1,x_2)$ for later use;}
\label{eq:TD(1,2)} 
\end{align}
where explicitly $\Dl_{\bt}^\mu = -\bigl( g_\bt^\mu +
\del^\mu\del_\bt/M^2 \bigr)\Delta^F$ with $\Delta^F$ the scalar
Feynman propagator~\eqref{eq:accidit-in-puncto}. Electromagnetic gauge
variations obviously do not affect~\eqref{eq:TD(1,2)}.

By pairing the last term in \eqref{eq:lepus} with $A$-type
terms in \eqref{eq:hic-iacet} and the last term in
\eqref{eq:hic-iacet} with $A$-type terms in~\eqref{eq:lepus} we obtain 
eight terms, that we organize as follows:
\begin{align}
& e^2\, A^\mu(x_1) F_{\rho\bt}(x_2) \bigl[
G^\7_{\al\mu}(x_1) \,\Dl^{\al\bt}(x_1 - x_2)\, W^\rho(x_2)
- G_{\al\mu}(x_1) \,\Dl^{\al\rho}(x_1 - x_2)\, W^{\bt\7}(x_2) +{}
\notag \\
& W^\al(x_1) (-g^\rho_\mu \del_\al + g^\rho_\al \del_\mu)
\Delta^F(x_1{-}x_2) W^{\bt\7}(x_2) 
- W^{\al\7}(x_1) (-g^\bt_\mu \del_\al + g^\bt_\al \del_\mu) 
\Delta^F(x_1{-}x_2) W^{\rho}(x_2) \bigr]
\notag \\
& + (x_1 \otto x_2) =: T_2^B(x_1,x_2) + T_2^C(x_1,x_2),
\label{eq:TB(1,2)} 
\end{align}
where $T_2^C$ denotes the $(x_1 \otto x_2)$-term. As noted in
App.~\ref{app:spin-one-basics}, third-order derivatives of~$\Delta^F$
cancel here of their own accord. In order to verify EGI in the above
expression, note first that it can only be violated on the diagonal
$x_1 = x_2$, that is by contact terms -- see the discussion on this in
Sect.~\ref{ssc:lost-in-translation}. Therefore in computing the
divergence of expressions like the above one selects only such terms
that under $\del_1^\mu$ produce local expressions. For instance, the
third term in~\eqref{eq:TB(1,2)} brings forth:
$$
W^\al(x_1)\, g^\rho_\al\,\del_1^\mu\del_\mu \,\Delta^F(x_1 - x_2)\,
W^{\bt\,\7}(x_2) 
= -i W^\rho(x_1) W^{\7\bt}(x_1) \,\dl(x_1 - x_2) +\cdots
$$
where the dots stand for a term ${}\sim M^2 W \Delta^F W^\7$; but it
is not hard to see that it is cancelled by a similar one coming from
the next term. In conclusion: $T_2^B$ is individually
electromagnetically gauge-invariant, and $T_2^D$, $T_2^B$, $T_2^C$
calculated up to now have no bearing on the generation of the
$AAWW^\7$-vertex in constructing~$T_2$.

Still, we are left with the most interesting contractions to
calculate. By pairing the two first terms in~\eqref{eq:hic-iacet} with
the two first in~\eqref{eq:lepus} we get:
\begin{align}
& -e^2\,A^\mu(x_1) A^\rho(x_2) \bigl[
- W^\al(x_1) \wt D_{\al\mu\;\bt\rho}(x_1 - x_2) W^{\bt\7}(x_2)
- G^\7_{\al\mu}(x_1) \,\Dl^{\al\bt}(x_1 - x_2)\, G_{\bt\rho}(x_2) +{}
\notag \\
& G^\7_{\al\mu}(x_1) (g^\al_\rho \del_\bt - g^\al_\bt \del_\rho)
\Delta^F(x_1 - x_2) W^\bt(x_2) 
+ W^{\al\7}(x_1) (-g^\bt_\mu \del_\al + g^\bt_\al \del_\mu)
\Delta^F(x_1 - x_2) G_{\bt\rho}(x_2) \bigr]
\notag \\
& + (x_1 \otto x_2) =: T_2^A(x_1,x_2).
\label{eq:TA(1,2)} 
\end{align}
Outside the diagonal the distribution $\wt D_{\al\mu\;\bt\rho}$
necessarily coincides with $D_{\al\mu\;\bt\rho}$, defined in
Eq.~\eqref{eq:rarior-albo} as the propagator for the $G$-fields.
Following the Epstein--Glaser method, we now look for the most general
Lorentz-covariant extension of this distribution having the same
scaling degree. The solution reads:
\begin{equation}
\wt D_{\al\mu,\bt\rho}(y) = D_{\al\mu,\bt\rho}(y)
+ i(a\,g_{\al\bt} g_{\mu\rho} - b\,g_{\al\rho} g_{\mu\bt}) \,\dl(y)
\label{eq:ansatz-tildeD} 
\end{equation}
with as yet unknown numbers $a,b \in \bC$. Note that the second term
in the above propagator generates a contact term eventually yielding
the $AAWW^\7$-vertex
\begin{equation}
T_F(x_1,x_2) := e^2\,A^\mu(x_1) A^\rho(x_2) W^\al(x_1) W^{\7\bt}(x_2)
\, i(a\,g_{\al\bt} g_{\mu\rho} - b\,g_{\al\rho} g_{\mu\bt})
\,\dl(x_1-x_2).
\label{eq:sic-transit} 
\end{equation}
A third Lorentz tensor might appear in the Ansatz
\eqref{eq:ansatz-tildeD}, namely $g_{\al\mu} g_{\rho\bt} \,\dl(y)$.
However, on insertion into \eqref{eq:TA(1,2)}, one obtains the same
contribution as $g_{\al\rho}g_{\mu\bt}\,\dl(y)$, namely
$$
e^2\,A^\mu(x_1) A^\rho(x_1) W_\mu(x_1) W^{\7}_\rho(x_1)
\,\dl(x_1 - x_2).
$$

Since EGI of $T_2$ can be violated only by local terms, it will
suffice to select the terms which are $\sim (\del)\dl(x_1 - x_2)$
after taking the divergence $\del_1^\mu$. Those can be of two types:
\begin{enumerate}
\item 
Either the contact term already contains a $\dl(x_1 - x_2)$; or
\item 
Such terms are generated when the $\del^\mu_{x_1}$ acts on
$\del_\mu \Delta^F(x_1 - x_2)$ or on 
$\del\del_\mu \Delta^F(x_1 - x_2)$, due to
$(\square + m^2)\Delta^F = -i\dl$.
\end{enumerate}

From the $W(x_1)\,W^\7(x_2)$ part in \eqref{eq:TA(1,2)}, we find:
\begin{itemize}
\item 
The type~(a) contributions:
\begin{align}
& i \del^\mu W^\al(x_1)\,(-a\,g_{\al\bt} g_{\mu\rho} 
+ b\,g_{\al\rho} g_{\mu\bt}) \,\dl(x_1 - x_2)\, W^{\bt\,\7}(x_2)
\notag \\
&\quad
+ i W^\al(x_1)\,\bigl( (-a\,g_{\al\bt} \del_\rho 
+ b\,g_{\al\rho} \del_\bt)\dl \bigr)(x_1 - x_2)\, W^{\bt\,\7}(x_2)
\notag \\
&= i \bigl( -a\,\del_\rho W^\al(x_1) W^{\7}_\al(x_1)
+ b\,\del_\bt W_\rho(x_1) W^{\bt\7}(x_1) \bigr) \dl(x_1 - x_2)
\notag \\
&\quad
-ia\, W^\al(x_1) W^\7_\al(x_2) \,\del_\rho\dl(x_1 - x_2)
+ ib\, W_\rho(x_1 )W^{\bt\7}(x_2) \,\del_\bt\dl(x_1 - x_2),
\label{eq:WWa} 
\end{align}
and the type~(b) contribution:
\begin{align}
& i W^\al(x_1)\,\bigl(
(g_{\al\bt} \del_\rho - g_{\al\rho} \del_\bt)\dl \bigr)(x_1 - x_2)\,
W^{\bt\,\7}(x_2)
\notag \\
&= i W^\al(x_1) W^\7_\al(x_2) \,\del_\rho\dl(x_1 - x_2)
- i W_\rho(x_1) W^{\bt\7}(x_2) \,\del_\bt\dl(x_1 - x_2).
\label{eq:WWb} 
\end{align}

\item
From the $W^\7(x_1)\,W(x_2)$-part of the exchange term, we obtain
terms \eqref{eq:WWa} and~\eqref{eq:WWb} with $W \otto W^\7$
interchanged.

\item
From the $W^\7(x_1)\,G(x_2)$-part of the original term, we get the
type~(b) contribution:
\begin{align}
& i W^{\al\7}(x_1)\, g^\bt_\al \,\dl(x_1 - x_2)\, \bigl(
-\del_\bt W_\rho(x_2) + \del_\rho W_\bt(x_2) \bigr)
\notag \\
&= i \bigl( -W^{\al\7}(x_1)\,\del_\al W_\rho(x_1)
+ W^{\al\7}(x_1)\,\del_\rho W_\al(x_1) \bigr) \,\dl(x_1 - x_2).
\label{eq:WGb}
\end{align}

\item
From the $W(x_1)\,G^\7(x_2)$-part of the exchange term, we obtain the
term~\eqref{eq:WGb} with $W \otto W^\7$ interchanged.
\end{itemize}

Summing up, the requirement is:
\begin{align}
0 &\overset{!}{=}
W^\al(x_1) W^{\7}_\al(x_2) \,\del_\rho\dl(x_1 - x_2)\, (1 - a)
+ W_\rho(x_1) W^{\bt\7}(x_2) \,\del_\bt\dl(x_1 - x_2)\, (b - 1)
\notag \\
&\quad
+ \del_\bt W_\rho(x_1) W^{\bt\7}(x_1) \,\dl(x_1 - x_2)\, (b - 1)
+ \del_\rho W^\al(x_1) W^\7_\al(x_1) \,\dl(x_1 - x_2)\, (1 - a)
\notag \\
&\quad + [W \otto W^\7].
\label{eq:gauinv1} 
\end{align}
Generally, for two free fields $B(x)$, $C(x)$ it holds that the three
terms
$$
\del B(x_1)\,C(x_1) \,\dl(x_1 - x_2),\quad
B(x_1)\,\del C(x_1) \,\dl(x_1 - x_2) \word{and}
B(x_1)\,C(x_2) \,\del\dl(x_1 - x_2)
$$
are linearly independent. We conclude that condition
\eqref{eq:gauinv1} has a unique solution, namely $a = 1 = b$.
\textit{In fine}, the resulting contact term may be written as
\begin{equation}
\wt D_{\al\mu,\bt\rho}(y) - D_{\al\mu,\bt\rho}(y)
= \ihalf (2g_{\al\bt} g_{\mu\rho} - g_{\al\rho} g_{\mu\bt}
- g_{\mu\al} g_{\rho\bt}) \,\dl(y),
\label{eq:nisi-dominus-frustra} 
\end{equation}
which reproduces the seagull $e^2 AAWW^\7$-vertex in the Feynman rules
for the EW interaction.

\begin{figure}[htb]
\begin{tikzpicture}[>=Stealth, scale=1.3]
\begin{scope}[xshift=-3cm]
\coordinate (V3) at (0,0) ;   \coordinate (H) at (-1.3,0) ;
\coordinate (V1) at (1.7,1) ; \coordinate (V2) at (1.7,-1) ;
\coordinate (G1) at (3,1) ;   \coordinate (G2) at (3,-1) ;
\draw[dashed] (H) node[left=3pt] {$H$} 
   -- (V3) node[above=2pt] {$x_3$} ;
\draw[photon] (V1) -- (G1) node[right=3pt] {$\ga$} ;
\draw[photon] (V2) -- (G2) node[right=3pt] {$\ga$} ;
\draw[wplus] (V3) -- (V1) node[pos=0.5, below right=-4pt] {$W^+$} 
   node[above left] {$x_1$} ;
\draw[wplus] (V1) -- (V2) node[pos=0.5, left=-3pt] {$W^+$} 
   node[below left] {$x_2$} ;
\draw[wplus] (V2) -- (V3) node[pos=0.5, above right=-4pt] {$W^+$} ;
\foreach \pt in {V1,V2,V3} \fill (\pt) circle (1.6pt) ;
\end{scope}
\begin{scope}[xshift=3cm]
\coordinate (V3) at (0,0) ; \coordinate (H) at (-1.3,0) ;
\coordinate (V4) at (2,0) ; 
\coordinate (G1) at (3,1) ; \coordinate (G2) at (3,-1) ;
\draw[dashed] (H) node[left=3pt] {$H$} -- (V3) ;
\draw[photon] (V4) -- (G1) node[right=3pt] {$\ga$} ;
\draw[photon] (V4) -- (G2) node[right=3pt] {$\ga$} ;
\draw[wplus] (V3) arc[radius=1cm, start angle=180, end angle=0] 
   node[pos=0.5, below] {$W^+$} ;
\draw[wplus] (V4) arc[radius=1cm, start angle=0, end angle=-180] 
   node[pos=0.5, above] {$W^+$} ;
\foreach \pt in {V3,V4} \fill (\pt) circle (1.6pt) ;
\end{scope}
\end{tikzpicture}
\caption{Diagrams contributing to the amplitude to be computed}
\label{fg:diagrams} 
\end{figure}
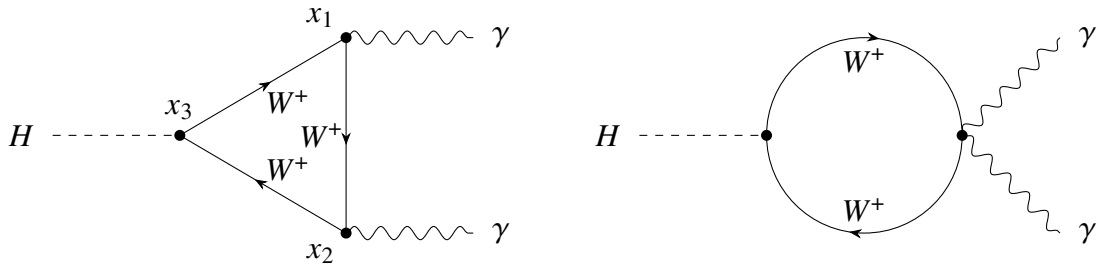

\marker 
\textit{Recapitulation}: assembling $T_3$ from $T_1$ and $T_2$ by the
Epstein--Glaser method, the above derived $AAWW^\7$-vertex (which is
part of $T_2$) generates the fish-like diagram in
Figure~\ref{fg:diagrams}. In the Feynman gauge, the $AAWW^\7$-vertex
was already derived in \cite{DuetschS99} by the same method. The fact
that we had to modify only $D^{\8\8}_{\8\8}$ (i.e., the
$G$-propagator), and not $\Dl^\8_\8$ (i.e., the $W$-propagator),
although here both are of the same singular order $\om = 0$, is in
accordance with the corresponding modification in the Feynman gauge
or, more generally, in a $R_\xi$ gauge: in such gauges the $AAWW^\7$
term \textit{can be derived in the same way}, but only the
$G$-propagator may be modified, because only this propagator has
$\om = 0$. The $WG$ and $GW$-propagators have singular order
$\om = -1$ (see Eqs.~\eqref{eq:extrema-exquisita}
and~\eqref{eq:remedia-optima-sunt}) and are not to be modified.

\marker
\textit{In conclusion}: almost the first thing to learn when working
in the BEG scheme is that what is usually taken from a top-down,
mysterious prevalence in particle physics of classical non-Abelian
gauge theories, with their quartic, second-order couplings, is here
\textit{derived} by pure quantum field theory operations. To wit, the
inductive construction of the time-ordered products respecting
``gauge invariance'', in the sense of \cite{DKS93, DHKS94,
DuetschS99, AsteDS99, Scharf01}, order by order in the couplings. In
fact, \textbf{all} of the reductive Lie algebra structure of the
Standard Model interactions, up to including at least one scalar
particle, comes \textit{automatically} in the BEG formalism from
respecting first principles of quantum field theory in building
the $\bS$-matrix%
\footnote{The derivation of the reductive Lie algebra structure was
annunciated by Stora~\cite{Stora-ESI}. His kind of principled
bottom-up approach has long been neglected in textbooks. A refreshing
exception is \cite[Problem~9.3]{MDSchwartz}.}
-- without invoking unobservable mechanisms. To go further into this
subject here would take us too far afield. We have merely dealt with
the example necessary for our purposes.

\subsection{Computation of the causal distribution at third order}
\label{ssc:eius-est-nolle}

We wish to mention here that triangular graphs in Epstein--Glaser
theory have been examined, for instance, in \cite[Chap.~3.8]{Scharf14}
-- the vertex correction in QED; in \cite{Axial} -- anomalies; as well
as in~\cite{BDF09} and \cite[Sect.~3.2]{Duetsch19}.%
\footnote{Some of these works are very instructive, in that they show
that cherished invariance properties cannot always be preserved
under distribution splitting.}
Of course, here we shall have more terms and a forest of indices.

Proceeding similarly to Sect.~\ref{sec:argumentum}, we perform the
adiabatic limit already at this stage. Since all the propagators in
the loop are massive, we may use \eqref{eq:dAAh}; hence, we only have
to compute the restricted distribution
$d^{\mu\nu}(k_1,k_2)|_{k_1^2=0=k_2^2}$. Again, because of the
kinematic constraints, only the first cut in formula
\eqref{eq:para-llorar} for $D_3$ counts (including the ``fish graph''
as derived in the previous subsection). Moreover, the discussions and
general conclusions of
Sects.~\ref{sec:quam-scriptum},~\ref{ssc:do-ut-des} and
\ref{ssc:hasta-ahi-podiamos-llegar} hold here, and will be assumed
without further ado.

\medskip

The two $WWA$-vertices are given in \eqref{eq:hic-iacet} and
\eqref{eq:lepus}, and the $WWh$-vertex reads
$$
T_1(x_3) = gM\, h(x_3) W^\nu(x_3) W^\7_\nu(x_3).
$$

The first step is to compute $R'^{A|B|C|D|F}_3(x_1, x_2; x_3)
:= T_2^{A|B|C|D|F}(x_1, x_2)\, T_1(x_3) \bigr|_\triangle$, where the
lower-order TOPs $T_2^\8$ have been respectively given by
lexicographical order in~\eqref{eq:TA(1,2)}, \eqref{eq:TB(1,2)},
\eqref{eq:TD(1,2)}, and finally \eqref{eq:sic-transit} together
with~\eqref{eq:nisi-dominus-frustra}. As it happens, their sum is
symmetric in~$(x_1,x_2)$, so we just introduce a general factor~$2$
and need not mention exchange anymore.

We start with
\begin{align}
R'^D_3(x_1,x_2;x_3) 
&= -2e^2gM\, h(x_3) F^{\mu\nu}(x_1) F^{\al\bt}(x_2)\,
r'^D_{\mu\nu,\al\bt}(y_1,y_2), \word{where}
\notag 
\\
r'^D_{\mu\nu,\al\bt}(y_1,y_2) 
&:= \Dl^{\ga\,+}_\nu(y_1) \,\Dl_{\mu\bt}(y_1 - y_2)
\,\Dl^+_{\al\ga}(y_2),  \word{and} y_k := x_k - x_3.
\label{eq:R'D_3-x} 
\end{align}
The ``fish'' diagram contribution reads:
\begin{align}
R'^F_3(x_1,x_2;x_3) &= -2e^2gM\, h(x_3) A^\mu(x_1) A^\nu(x_2)\,
r'^F_{\mu\nu}(y_1,y_2) \word{with}
\notag \\
r'^F_{\mu\nu}(y_1,y_2) &:= i\,\dl(y_1 - y_2)\, \bigl[ 
-g_{\mu\nu} \,\Dl^{\ga\,+}_\bt(y_1) \,\Dl^{\bt\,+}_\ga(y_1)
+ \Dl^{\ga\,+}_\nu(y_1) \,\Dl^+_{\mu\ga}(y_1) \bigr].
\label{eq:R'F_3-x} 
\end{align}
Next we compute $R'^B_3$:
\begin{align}
R'^B_3(x_1,x_2;x_3) &= -2e^2gM\, h(x_3) A^\mu(x_1) F_{\nu\bt}(x_2)\,
{r'^B}_\mu^{\nu\bt}(y_1,y_2)
\notag \\
\text{with}\quad {r'^B}_\mu^{\nu\bt}(y_1,y_2) 
&:= (g_\mu^\ga \del_\al - g^\ga_\al \del_\mu)\Dl^+(y_1)
\,\Dl^{\al\bt}(y_1 - y_2) \,\Dl^{\nu\,+}_\ga(y_2)
\notag \\
&\quad + \Dl^{\al\ga\,+}(y_1)\,
(g_\mu^\nu \del_\al - g^\nu_\al \del_\mu) \Delta^F(y_1 - y_2)
\,\Dl^{\bt\,+}_\ga(y_2).
\label{eq:R'B_3-x} 
\end{align}
The relation $R'^C_3(x_1,x_2;x_3) := R'^B_3(x_2,x_1;x_3)$ obviously
holds. Finally, for $R'^A_3$, we collect
\begin{align}
R'^A_3(x_1,x_2;x_3) 
&= -2e^2gM\, h(x_3) A^\mu(x_1) A^\nu(x_2)\, r'^A_{\mu\nu}(y_1,y_2),
\word{with}
\notag \\
r'^A_{\mu\nu}(y_1,y_2) 
&:= - (g_\mu^\ga \del_\al - g^\ga_\al \del_\mu)\Dl^+(y_1)
\,\Dl^{\al\bt}(y_1 - y_2)\,
(g_{\nu\ga} \del_\bt - g_{\bt\ga} \del_\nu)\Dl^+(y_2)
\notag \\
&\quad - \Dl^{\al\ga\,+}(y_1) \,D_{\al\mu,\bt\nu}(y_1 - y_2)
\,\Dl^{\bt\,+}_\ga(y_2)
\notag \\
&\quad - (g_\mu^\ga \del_\al - g^\ga_\al \del_\mu)\Dl^+(y_1)\,
(g^\al_\nu \del_\bt - g^\al_\bt \del_\nu)\Delta^F(y_1 - y_2)
\,\Dl^{\bt\,+}_\ga(y_2)
\notag \\
&\quad + \Dl^{\al\ga\,+}(y_1)\,
(g_{\mu\bt} \del_\al - g_{\al\bt} \del_\mu)\Delta^F(y_1 - y_2)\,
(g_{\ga\nu} \del^\bt - g^\bt_\ga \del_\nu)\Dl^+(y_2).
\label{eq:R'A_3-x} 
\end{align}

Next we express the resulting integrals by momentum space integrals.
By using $F^{\mu\nu}(k) = -i \bigl( k^\mu A^\nu(k) - k^\nu A^\mu(k)
\bigr)$ and $R'_3:=R'^A_3+R'^B_3+R'^C_3+R'^D_3+R'^F_3$ we gather
\begin{align}
&\int dx_1\,dx_2\,dx_3\, R'_3(x_1,x_2;x_3)
\label{eq:R'3-p} 
\\
&\quad = -2e^2gM (2\pi)^2\int dk_1\,dk_2\, h(k_1 + k_2)
A^\mu(-k_1) {A^\nu}(-k_2)\, r'_{\mu\nu}(k_1,k_2),
\notag
\\
&\word{where} r'_{\mu\nu}(k_1,k_2) := r'^A_{\mu\nu}(k_1,k_2) +
r'^F_{\mu\nu}(k_1,k_2)
\label{eq:r'(k)} 
\\
&+ \bigl[ ik_2^\bt (r'^B_{\mu,\bt\nu}(k_1,k_2) -
r'^B_{\mu,\nu\bt}(k_1,k_2)) + (k_1,\mu) \otto (k_2,\nu) \bigr]
\notag
\\
&- k_1^\bt k_2^\al \bigl[ r'^D_{\bt\mu,\al\nu}(k_1,k_2) -
r'^D_{\mu\bt,\al\nu}(k_1,k_2) - r'^D_{\bt\mu,\nu\al}(k_1,k_2) +
r'^D_{\mu\bt,\nu\al}(k_1,k_2) \bigr].
\notag
\end{align}

To compute the (combinations of) Fourier transformed
${r'}^{\dots}_{\dots}$-distributions appearing on the right hand side
of \eqref{eq:r'(k)}, we bring in $k_1^2 = 0 = k_2^2$, and omit all
terms containing a factor $k_{1\mu}$ or $k_{2\nu}$: this is allowed in
view of $k^\la A_\la(-k) = 0$. Similarly to the analogous computation
for the scalar model in the previous section, let us work with the
integration variable $q := k + k_2$, and introduce $P := k_1 + k_2$,
hence $k_1 - k = P - q$ and $s := P^2 = 2k_1k_2$ and $Pk_1 = k_1k_2 =
Pk_2$. Due to the factors $\Dl^+(q)$ and $\Dl^+(P - q)$, we may use
the relations
\begin{gather*}
q^2 = M^2, \quad (P - q)^2 = M^2, 
\word{hence}  2Pq = s = 2(k_1k_2),  \word{implying}
\\
(P - q)q = (k_1k_2) - M^2, \quad 0 = (P(q - k_2)) = (Pk), \quad
(q - k_2)^2 = M^2 - 2(qk_2),
\end{gather*}
and $((P - q)\,k_1) = (q k_2)$, that is, $(qk_1) + (qk_2) = (k_1k_2)$.
Remember also that we may replace $P_\mu \to k_{2\mu}$ and
$P_\nu \to k_{1\nu}$.

To tally the fish diagram contribution, we introduce the integrals
\begin{equation}
J^{\{\.|\mu|\mu\nu\}}(P) 
:= \int d^4q\, \{1|q^\mu|q^\mu q^\nu\} \,\Dl^+(P - q)\,\Dl^+(q).
\label{eq:J-def} 
\end{equation}
These symbols $J^\8(P)$ generalize that of the scalar $J(P)$, defined
in~\eqref{eq:def-J}. We obtain:
\begin{align}
r'^F_{\mu\nu}(k_1,k_2) &= \frac{i}{(2\pi)^4} \biggl( g_{\mu\nu} \Bigl(
-2 + \frac{s}{M^2} - \frac{s^2}{4M^4} \Bigr) J(P) - \frac{k_{2\mu}
k_{1\nu}}{M^2} J(P)
\notag
\\
&\hspace{3em} + \frac{1}{M^2} \Bigl( k_{2\mu} J_\nu(P) + k_{1\nu}
J_\mu(P)\,\frac{s}{2M^2} \Bigr) - J_{\mu\nu}(P)\Bigl( \frac{1}{M^2} +
\frac{s}{2M^4} \Bigr) \biggr).
\label{eq:r'F-p} 
\end{align}
 
To figure out $J^\mu$ and $J^{\mu\nu}$ one first argues that
\begin{align}
J^\mu(P) &= \th(P_0)\,\th(s - 4M^2)\,P^\mu\,g(s),
\notag \\
J^{\mu\nu}(P)
&= \th(P_0)\,\th(s - 4M^2)\, (P^\mu P^\nu\,a(s) + g^{\mu\nu} c(s)),
\label{eq:J-1} 
\end{align}
for appropriate $g(s), \,a(s)$ and $c(s)$. The latter can be obtained
from the identites
$$
J^\mu(P) P_\mu = \half sJ(P), \quad
J^\mu_\mu(P) = M^2 J(P), \quad
J^{\mu\nu}(P) P_\mu P_\nu = \quarter s^2J(P).
$$ 
and from $J(P)=\th(P^0)\, \th(s - 4M^2)\,4i\pi^2\,s\, G(s)$ -- see
\eqref{eq:G}. So we arrive at
$$
g(s) = 2i\pi^2 s\,G(s); \quad
a(s) = \frac{4i\pi^2}{3}\,(s - M^2) G(s); \quad
c(s) = \frac{4i\pi^2}{3}\, (M^2 - s/4) s\,G(s),
$$
yielding
\begin{align}
r'^F_{\mu\nu}(k_1,k_2) &= \frac{1}{4\pi^2}\, \th(P^0)\, \th(s -
4M^2)\, G(s) \biggl[ \Bigl( \frac{14}{3} - \frac{11}{6}\,\frac{s}{M^2}
+ \frac{5}{12}\,\frac{s^2}{M^4} \Bigr) g_{\mu\nu}(k_1 k_2)
\notag
\\
&\quad + k_{1\nu} k_{2\mu} \Bigl( -\frac{1}{3} +
\frac{2}{3}\,\frac{s}{M^2} - \frac{1}{12}\,\frac{s^2}{M^4} \Bigr)
\biggr].
\label{eq:r'F-final} 
\end{align}

\paragraph{Other new integrals.}
We introduce already all the new integrals required in this section.
By means of $(q - k_2)^2 - M^2 = - 2(qk_2)$, implying
$(qk_2)\,\Delta^F(q - k_2) = - i/8\pi^2$, some of the integrals to
appear are calculated:
\begin{align*}
K^{\{\.|\mu|\mu\nu\}}(P)& := \int \! d^4q\, \{1|q^\mu|q^\mu q^\nu\}\,
(qk_2)\,\Dl^+(P - q) \Delta^F(q - k_2)\,\Dl^+(q)
= -\frac{i}{8\pi^2}\, J^{\{\.|\mu|\mu\nu\}}(P),\\
L(P) &:= \int \! d^4q\, (qk_2)^2 \,\Dl^+(P - q) \Delta^F(q - k_2)\,\Dl^+(q)
= k_{2\mu}\,K^\mu .
\end{align*}

The following integrals will also be needed:
\begin{align*}
N^{\{\.|\mu|\mu\nu\}}(P) &:= \int d^4q\, 
\{1|q^\mu|q^\mu q^\nu\} \,\Dl^+(P - q) \Delta^F(q - k_2) \,\Dl^+(q).
\end{align*}
By using $k_2^\nu=0$, they are easily be expressed in terms of the
integrals computed in Sect.~\ref{ssc:d-scalar}:
$$
N(P) = I(P)\,,\qquad 
N^\mu(P) = I^\mu(P) + k_2^\mu I(P)\,,\qquad
N^{\mu\nu}(P) = I^{\mu\nu}(P) + k_2^\mu I^\nu(P)\,, 
$$

\paragraph{Electromagnetic gauge invariance.}
The computation of the individual terms in \eqref{eq:r'(k)} is
lengthy, but straightforward. For the $r'^B$-terms we reap
\begin{align*}
& ik_2^\bt \bigl( 
r'^B_{\mu,\bt\nu}(k_1,k_2) - r'^B_{\mu,\nu\bt}(k_1,k_2) \bigr) 
= \frac{1}{(2\pi)^2} \biggl\{ 
- 2N(P)(g_{\mu\nu} (k_1k_2) - k_{1\nu} k_{2\mu})
\notag\\
&\quad + g_{\mu\nu} (k_1k_2) \biggl(
2\frac{K(P)}{M^2} - \frac{L(P)}{M^4} \biggr)
- k_{1\nu} k_{2\mu} (k_1k_2)\,\frac{K(P)}{M^4}
+ N_{\mu\nu}(P) \biggl( 
2\frac{(k_1k_2)}{M^2} - \frac{(k_1k_2)^2}{M^4} \biggr)
\notag\\
&\quad + k_{2\mu} \biggl( N_\nu(P) \Bigl( - 2\frac{(k_1k_2)}{M^2}
+ \frac{(k_1k_2)^2}{M^4}\, \Bigr) 
+ K_\nu(P)\,\frac{(k_1k_2)}{M^4} \biggr)
+ k_{1\nu}\,K_\mu(P) \biggl( - \frac{2}{M^2} 
+ \frac{k_1k_2}{M^4} \biggr) \biggr\}.
\end{align*}
Inserting the integrals calculated above, one reaches for $ik_2^\bt
\bigl( r'^B_{\mu,\bt\nu}(k_1,k_2) - r'^B_{\mu,\nu\bt}(k_1,k_2) \bigr)
+ (k_1,\mu) \otto (k_2,\nu)$ the following result:
\begin{align}
\frac{\th(P^0)\,\th(s - 4M^2)}{(2\pi)^2}\,
\bigl( g_{\mu\nu} (k_1k_2) - k_{1\nu} k_{2\mu} \bigr)\, \biggl(
F(s) \Bigl( -2 - \frac{s}{2M^2} \Bigr) + G(s)\,\frac{s}{M^2} \biggr).
\label{eq:rB-final} 
\end{align}
Note that $ik_2^\bt \bigl( r'^B_{\mu,\bt\nu}(k_1,k_2)
- r'^B_{\mu,\nu\bt}(k_1,k_2) \bigr)$ is individually gauge invariant.
This reflects the known fact that $T_2^B$ given by~\eqref{eq:TB(1,2)}
is individually gauge invariant.

Analogously, since $T_2^D$ \eqref{eq:TD(1,2)} is trivially gauge
invariant, we expect the combination
$$
- k_1^\bt k_2^\al \bigl( r'^D_{\bt\mu,\al\nu}(k_1,k_2)
- r'^D_{\mu\bt,\al\nu}(k_1,k_2) - r'^D_{\bt\mu,\nu\al}(k_1,k_2)
+ r'^D_{\mu\bt,\nu\al}(k_1,k_2) \bigr),
$$
to contain the factor $g_{\mu\nu} (k_1k_2) - k_{1\nu} k_{2\mu}$.
Indeed, we obtain
\begin{align}
& \frac{1}{(2\pi)^2} \biggl\{
-2(g_{\mu\nu} (k_1k_2) - k_{1\nu} k_{2\mu})\, N(P)
+ g_{\mu\nu} \biggl( -2 \frac{L(P)}{M^2} 
+ (k_1k_2) \Bigl( 2\frac{K(P)}{M^2} - \frac{L(P)}{M^4} \Bigr) \biggr)
\notag \\
&\quad + k_{1\nu} k_{2\mu}\, (k_1k_2) \,\frac{K(P)}{M^4}
+ N_{\mu\nu} \biggl( 2\frac{(k_1k_2)}{M^2} 
+ \frac{(k_1k_2)^2}{M^4} \biggr) 
+ k_{1\nu} \biggl( -2 \frac{K_{1\mu}(P)}{M^2} 
- (k_1k_2)\,\frac{K_{\mu}(P)}{M^4} \biggr)
\notag
\\
&\quad + k_{2\mu} \biggl( 2\frac{K_{\nu}(P)}{M^2}
- 2\frac{(k_1k_2)}{M^2}\,N_\nu(P) - \frac{(k_1k_2)^2}{M^4}\,N_\nu(P)
+ (k_1k_2)\,\frac{K_{1\nu}(P)}{M^4} \biggr) \biggr\}
\notag \\
&= \frac{\th(P^0)\,\th(s - 4M^2)}{(2\pi)^2}\,
(g_{\mu\nu} (k_1k_2) - k_{1\nu} k_{2\mu}) \biggl[
\Bigl(-1 + \frac{s}{4M^2} \Bigr)F(s) - \frac{s^2}{4M^4}\,G(s) \biggr].
\label{eq:rD-final} 
\end{align}

Finally, for~$r'^A$ we get
\begin{align*}
r'^A_{\mu\nu}(k_1,k_2)
&= \frac{1}{(2\pi)^2} \biggl\{
-2(g_{\mu\nu} (k_1k_2) - k_{1\nu} k_{2\mu})\,N(P)
\notag \\
&\quad + g_{\mu\nu} \biggl( 
\Bigl( 2 + \frac{2(k_1k_2)}{M^2} \Bigr)\,K(P)
- \Bigl( \frac{2}{M^2} + \frac{(k_1k_2)}{M^4} \Bigr)\,L(P) \biggr)
\notag \\
&\quad + k_{1\nu} k_{2\mu} \Bigl( -\frac{(k_1k_2)}{M^4} \Bigr)\,K(P)
+ k_{1\nu} \Bigl( -\frac{2}{M^2} + \frac{(k_1k_2)}{M^4} \Bigr)\,K_\mu(P)
\notag \\
&\quad + k_{2\mu} \biggl( \Bigl( -12 + 2\frac{(k_1k_2)}{M^2}
- \frac{(k_1k_2)^2}{M^4} \Bigr)\,N_\nu(P)
+ \Bigl( \frac{4}{M^2} + \frac{(k_1k_2)}{M^4} \Bigr)\,K_\nu(P) \biggr)
\notag \\
&\quad + \Bigl( 12 - 2\frac{k_1k_2}{M^2} 
+ \frac{(k_1k_2)^2}{M^4} \Bigr)\,N_{\mu\nu}(P)
- \Bigl( \frac{2}{M^2} + 2\frac{(k_1k_2)}{M^4} \Bigr)\,K_{\mu\nu}(P)
\biggr\}.
\end{align*}
After insertion of the integrals, this is equal to
\begin{align*}
&\frac{\th(P^0)\,\th(s - 4M^2)}{(2\pi)^2} \biggl\{
(g_{\mu\nu} (k_1k_2) - k_{1\nu} k_{2\mu})\,F(s)\,
\Bigl(-3 + 12\,\frac{M^2}{s} + \frac{s}{4M^2} \Bigr)
\\
&+ G(s) \biggl[ g_{\mu\nu} (k_1k_2) \Bigl( -\frac{14}{3}
+ \frac{5}{6}\,\frac{s}{M^2} - \frac{1}{6}\,\frac{s^2}{M^4} \Bigr)
+ k_{1\nu} k_{2\mu} \Bigl( \frac{1}{3} + \frac{s}{3M^2} 
- \frac{s^2}{6M^4} \Bigr) \biggr] \biggr\}.
\end{align*}
The sum $r'^A + r'^F$ is indeed gauge invariant:
\begin{align*}
r'^A_{\mu\nu}(k_1,k_2) + r'^F_{\mu\nu}(k_1,k_2)
&= \frac{\th(P^0)\,\th(s - 4M^2)}{(2\pi)^2}\,
(g_{\mu\nu} (k_1k_2) - k_{1\nu} k_{2\mu})
\\
&\quad \x \biggl( F(s)\,\Bigl( -3 + 12\,\frac{M^2}{s} 
+ \frac{s}{4M^2} \Bigr) + G(s)\,\Bigl( - \frac{s}{M^2}
+ \frac{s^2}{4M^4} \Bigr) \biggr),
\end{align*}
as expected from the outcome of the discussion in subsection
\ref{ssc:per-angusta}.

The $G(s)$-terms are seen to cancel out, and for the total $r'$ we
obtain the following gauge-invariant result:
\begin{equation}
r'_{\mu\nu}(k_1,k_2) = \frac{\th(P^0)\,\th(s - 4M^2)}{(2\pi)^2}\,
(g_{\mu\nu} (k_1k_2) - k_{1\nu} k_{2\mu})\,6 F(s)\,(-1 + 2M^2/s).
\label{eq:r'-final} 
\end{equation}
Like for the scalar model, $a'_{\mu\nu}(k_1,k_2)$ is obtained from
$r'_{\mu\nu}(k_1,k_2)$ in~\eqref{eq:r'-final} by replacing $\th(P^0)$
by $\th(-P^0)$, and hence $d_{\mu\nu} = a'_{\mu\nu} - r'_{\mu\nu}$ by
replacing $\th(P^0)$ by~$-\sgn(P^0)$.

Hence $d^{\mu\nu}_\gi(k_1, k_2)|_{k_1^2=0=k_2^2}$ is of the form
\begin{align}
&d^{\mu\nu}_\gi(k_1,k_2)\bigr|_{k_1^2=0=k_2^2}
= P^{\mu\nu}(k_1,k_2)\,d_0(P), \word{where}
\label{eq:d-onshell-a} 
\\
&d_0(P) := \frac{i\,\sgn(P^0)\,\th(\tilde\rho - 1)}{(2\pi)^5}\,
b_1(\tilde\rho), \quad
b_1(\tilde\rho) := -\frac{3}{16 M^2\,\tilde\rho}\,
\biggl( \frac{1}{\tilde\rho} - 2 \biggr)\,
\log\frac{1 + \sqrt{1 - \tilde\rho^{-1}}}
{1 - \sqrt{1 - \tilde\rho^{-1}}}\,.
\notag
\end{align}
The factor $3$ above was to be expected, since the Proca field has
three components. The singular order of the on-shell distribution
$d^{\mu\nu}_\gi(k_1,k_2)|_{k_1^2=0=k_2^2}$ is clearly equal to zero.
It is also easy to show that power counting rules imply that the
singular order of the \textit{off-shell} $d$-distribution%
\footnote{This notation is badly abused in this paper. But that is
hardly avoidable.}
$d^{\mu\nu}$ satisfies the bounds 
$6 \geq \om := \om(d^{\mu\nu}) \geq 2$.

\subsection{Distribution splitting in EW theory}
\label{ssc:quod-feceris}

The off-shell $d^{\mu\nu}$-distribution for the diphoton decay of the
higgs via EW vector bosons is also of the genre~\eqref{eq:dV1}, the
difference being that the pertinent polynomials $h_j^{\mu\nu}$ are of
a higher degree, i.e., $\om := \om(d^{\mu\nu})$ has a greater value.
Therefore, the distribution splitting method for null momenta
developed in Section~\ref{sec:quam-scriptum} does apply:
$t^{\mu\nu}_\gi|_\sM = a^{c\,\mu\nu}_\gi|_\sM$ can be computed by
inserting the light-cone restriction of $d^{\mu\nu}_\gi(k_1,k_2)$ into
the dispersion integral~\eqref{eq:dispersion_simplified}. Again, this
restricted $d$-distribution is of the genre~\eqref{eq:d0-h-h1} for the
given~$b_1$. Hence, the \textit{central solution} compatible with EGI
is obtained by:
\begin{align}
& t^{c\,\mu\nu}_\gi(k_1,k_2) = P^{\mu\nu}(k_1,k_2)\,t^c_\gi(P),
\word{where}
\notag \\
& t^c_\gi(P) = \frac{i\eta}{2\pi} \int_{|t|\geq\sqrt{1/\tilde\rho}}\,
\frac{t^2\,d_0(tP)\,dt}{(1 - t)\,t^{\max\{\om+1,\,0\}}} 
= -\frac{1}{(2\pi)^6} \int_{1/\tilde\rho}^\infty
\frac{u\,b_1(u\tilde\rho)\,du} {u^{\max\{\piso{\om/2}+1,\,0\}}(1-u)},
\label{eq:t-dispint} 
\end{align}
for $(k_1,k_2) \in \sM$. Let us first assume that $\om =2$. Before
computing, we recall that the \textit{ambiguity} of the result for
$t^{\mu\nu}_\gi$ will be given in that case by a polynomial of
\textit{second} degree in~$(k_1,k_2)$ containing the factor
$P^{\mu\nu}$; that is, in the occasion, a constant multiple
of~$P^{\mu\nu}$ itself.

Besides a global factor $-3(16M^2 (2\pi)^6)^{-1}$, in view of
\eqref{eq:d-onshell-a} and making the customary change of variable
$v=u\tilde\rho$, to obtain $t^c_\gi(\tilde\rho)$ we ought to compute:
\begin{align*}
&\tilde\rho\int_1^\infty\!dv\,\Biggl[ 
\frac{\log\bigl( 1 + \sqrt{1 - v^{-1}} \bigr) 
/ \bigl( 1 - \sqrt{1 - v^{-1}} \bigr)}{v^3(\tilde\rho - v)}
- \frac{2 \log\bigl( 1 + \sqrt{1 - v^{-1}} \bigr)
/ \bigl( 1 - \sqrt{1 - v^{-1}} \bigr)}{v^2(\tilde\rho - v)} \Biggr]
\\
&= \int_1^\infty\! dv\, \Biggl[ 
\frac{\log\bigl( 1 + \sqrt{1 - v^{-1}} \bigr) 
/ \bigl( 1 - \sqrt{1 - v^{-1}} \bigr)}{v^2(\tilde\rho - v)}
- \frac{2 \log\bigl( 1 + \sqrt{1 - v^{-1}} \bigr) 
/ \bigl( 1 - \sqrt{1 - v^{-1}} \bigr)}{v(\tilde\rho - v)} \Biggr]
\\
&+ \int_1^\infty\! dv\, \Biggl[ 
\frac{\log\bigl( 1 + \sqrt{1 - v^{-1}} \bigr) 
/ \bigl( 1 - \sqrt{1 - v^{-1}} \bigr)}{v^3}
- \frac{2 \log\bigl( 1 + \sqrt{1 - v^{-1}} \bigr) 
/ \bigl( 1 - \sqrt{1 - v^{-1}} \bigr)}{v^2} \Biggr].
\end{align*}
Notice that the last two integral terms just yield a number, equal to
minus the sum of the values of the two previous ones at
$\tilde\rho = 0$. The first integral term is already known from the
discussion in subsection~\ref{ssc:splitting}, and in all yields
$$
\frac{3}{8M^2(2\pi)^6} J_2(\tilde\rho)  
= \frac{-3}{8M^2(2\pi)^6} \biggl(
\frac{1}{\tilde\rho} - \frac{f(\tilde\rho)}{\tilde\rho^2} \biggr).
$$
This is essentially $3F_0(\tilde\rho)$, that clearly contributes to
the expected final result for $F_1(\tilde\rho)$ -- recall
\eqref{eq:magister-dixit}. According to Eq.~\eqref{eq:J1} -- also in
subsection~\ref{ssc:splitting} -- in the case $a=0$, the second
integral term yields
$$
- \frac{2\cdot 3}{8M^2(2\pi)^6}\,J_1(\tilde\rho,0)= 
- \frac{6}{8M^2(2\pi)^6}\, \frac{f(\tilde\rho)}{\tilde\rho}\,.
$$
Including the last two integrals, we recover
$t^{c}_{\gi}(\tilde\rho)\sim F_1(\tilde\rho) - 7$. Taking into account
the Epstein--Glaser ambiguity as limited by EGI, and bringing in
constant prefactor appearing in~\eqref{eq:R'3-p}, we end up with the
amplitude
\begin{equation}
- 2e^2gM (2\pi)^2\,t^{\mu\nu}_{\gi}(k_1,k_2) 
= \frac{e^2g}{4(2\pi)^4\,M}\, P^{\mu\nu}(k_1,k_2)\,
\bigl( F_1(\tilde\rho) + C \bigr),
\label{eq:latapadelperol} 
\end{equation}
with $C$ an arbitrary constant. 

If $\om > 2$, the central solution within the unitary gauge is
obtained by adding the following expression to the above obtained
result, more precisely, to $-(8M^2(2\pi)^6)^{-1} \bigl(
F_1(\tilde\rho) - 7 \bigr)$:
\begin{align}
& -\frac{1}{(2\pi)^6} \int_{\tilde\rho^{-1}}^\infty du\,
\frac{b_1(u\tilde\rho)}{1 - u}
\biggl( \frac{1}{u^{\piso{\om/2}}} - \frac{1}{u} \biggr)
= -\frac{1}{(2\pi)^6} \int_{\tilde\rho^{-1}}^\infty du\,
b_1(u\tilde\rho) \sum_{k=0}^{\piso{\om/2}-2}
\frac{1}{u^{\piso{\om/2} - k}}
\nonumber
\\
&\quad = \sum_{k=0}^{\piso{\om/2}-2} \tilde c_k\,
\tilde\rho^{\piso{\om/2}-1-k}; \word{where the} 
\tilde c_k := -\frac{1}{2\pi} \int_1^\infty dv\,
\frac{b_1(v)}{v^{\piso{\om/2}-k}}
\label{eq:t-t0} 
\end{align}
do not depend on $\tilde\rho$; in the first step we have used the
relation
$$
\frac{1}{u^{\piso{\om/2}}} - \frac{1}{u}
= (1 - u) \sum_{k=0}^{\piso{\om/2}-2} \frac{1}{u^{\piso{\om/2}-k}}\,,
$$
and in the last step we have made again the substitution $v :=
u\tilde\rho$. The point is that \eqref{eq:t-t0} is a polynomial in
$\tilde\rho$ allowed by the surviving Epstein--Glaser ambiguity.
Therefore, without knowing the precise value of $\om$, the general
Epstein--Glaser solution respecting EGI can be written as
\begin{equation}
- 2e^2gM (2\pi)^2\,t^{\mu\nu}_\gi(k_1,k_2)
= \frac{e^2g}{4(2\pi)^4\,M}\, P^{\mu\nu}(k_1,k_2)\, \biggl(
F_1(\tilde\rho) + \sum_{k=0}^{\piso{\om/2}-1} C_k\tilde\rho^k \biggr),
\label{eq:gratatio-capitis} 
\end{equation}
where the constants $C_k$ are all arbitrary. However, terms
corresponding to $\om\geq 4$ can be discarded on grounds of
perturbative unitarity.

\medskip

Let us now to come back to reference~\cite{ChristovaI}. It is
argued there that the convergent integral
$$
t_0(\tilde\rho) := \int_{\tilde\rho^{-1}}^\infty du\,
\frac{b_1(u\tilde\rho)}{1 - u}
$$
leads to the correct result. From the standpoint of this reference,
formula~\eqref{eq:t-dispint} is ``oversubtracted''. One obtains there,
yet again
\begin{align}
P^{\mu\nu} t_0(\tilde\rho) 
&= P^{\mu\nu} \! \int_{\tilde\rho^{-1}}^\infty du\, 
\frac{b_1(u\tilde\rho)}{1 - u} 
= - \frac{3\,P^{\mu\nu}}{8M^2(2\pi)^6}\,
(2 J_1(\tilde\rho,0) - J_2(\tilde\rho))
\nonumber \\
&= - \frac{P^{\mu\nu}}{8M^2(2\pi)^6} \,\biggl[ 
\frac{3}{\tilde\rho} + \frac{6f(\tilde\rho)}{\tilde\rho}
- \frac{3f(\tilde\rho)}{\tilde\rho^2} \biggr]
= - \frac{P^{\mu\nu}}{8M^2(2\pi)^6}\,(F_1(\tilde\rho) - 2).
\label{eq:disp-int-solu} 
\end{align}
So the naive on-shell computation yields a \textit{particular}
Epstein--Glaser solution. In the present case, however,
equation~\eqref{eq:gratatio-capitis} tells us that we are
\textit{forced} to add (at least) a polynomial of degree two
respecting EGI, that is, a term $C P^{\mu\nu}$ for~$C$ an
indeterminate constant -- with which our result for the amplitude is
compatible with the generally accepted one. 

\begin{remk} 
From our viewpoint, the expression in \eqref{eq:disp-int-solu} is the
unique Epstein--Glaser solution respecting EGI, corresponding to the
following causal $d$-distribution: let the result~\eqref{eq:d-onshell-a} for
$d^{\mu\nu}_\gi(k_1,k_2)$ (obtained by light-cone restriction of the
photon momenta) be \textit{interpreted as an unrestricted element of}
$\sS'(\bR^8)$, that is, all values $(k_1,k_2) \in \bR^8$ are admitted.
One easily verifies that this $d$-distribution has causal support, so
the splitting problem is well defined, and since its singular order is
zero, the EGI requirement selects a unique splitting solution. Writing
the latter suitably as a dispersion integral in momentum space, one
verifies the claim. This procedure strongly simplifies explicit
computations, but it is not conceptually correct.%
\footnote{Actually, in the first edition of the book by Scharf on
quantum electrodynamics (i.e.,~\cite{Scharf89} rather than
\cite{Scharf14}), the vertex function in QED at third order was
computed by such a method.}
\end{remk}

\subsection{Fixing the normalization polynomial by agreement with the
Feynman gauge}
\label{ssc:ita}

In order to determine the normalization polynomial we may as well
invoke the computation of the $h \to \ga\ga$ decay in the Feynman
gauge and \textit{gauge-fixing independence}, namely, the requirement
that observable quantities should not depend on the choice of gauge.%
\footnote{The equivalence or inequivalence of calculations performed
in different gauges was a nagging worry of Raymond Stora in his last
years. The classic paper~\cite{ChristTDLee} illustrates the 
difficulties lurking here.}
Motivated by results of \cite{AsteSD98},%
\footnote{This reference works with a formulation of gauge invariance
suitable for the BEG scheme. In that framework it was shown for
the various $R_\xi$-gauges that the $T$-products can be normalized in
such a way that the physical $S$-matrix (i.e., for in- and out-states
being on-shell) does not depend on the gauge-fixing parameter~$\xi$ in
the formal adiabatic limit; and that this normalization is compatible with
gauge invariance in the mentioned sense.}
we contend that the ``entirely on-shell'' amplitude coming out of our
previous computation should coincide with that of an Epstein--Glaser
computation in the Feynman gauge. By ``entirely on-shell'' we mean
that not only the photons, but \textit{also the higgs} is on-shell,
that is, $\tilde\rho = \rho := m_h^2/4M^2$.

Denote the Epstein--Glaser result for the $d$-distribution in the
Feynman gauge by $d^1_{\mu\nu}$. In contrast with the unitary gauge,
there additionally contribute diagrams with St\"uckelberg fields and
Faddeev--Popov ghosts (as inner lines) to $d^1_{\mu\nu}$, see
e.g.~\cite{WusStrikeBack}. We spare the reader the details of the
construction of the TOPs, and in particular the derivation of the
$AAWW^\7$-vertex in this context. For photons on-shell with physical
polarizations (setting $k_1^2 = 0 = k_2^2$ and omitting pure gauge
terms $\sim k_{1\mu}$ or $\sim k_{2\nu}$), our result reads:
\begin{align}
&d_{\mu\nu}^1(k_1,k_2) = -\frac{1}{2^3\,(2\pi)^6\,M^2}
\label{eq:absorptive_r_xi}\x  
\\
&\bigg[ (k_1 k_2)g_{\mu\nu} \left(-\frac{3}{\tilde\rho^2}
+\frac{7}{\tilde\rho} -\frac{\rho}{\tilde\rho^2} \right) -k_{2\mu}
k_{1\nu} \left(-\frac{3}{\tilde\rho^2}+\frac{8}{\tilde\rho} -
\frac{2\rho}{\tilde\rho^2} \right) \bigg]\Im f(\tilde\rho).
\notag
\end{align}

The tedious computation of the above absorptive part was done with the
aid of the Mathematica package FeynCalc~\cite{feyncalc}. The
computation proceeds along the lines of the computations of related
absorptive parts in scalar electrodynamics and electroweak theory in
the unitary gauge presented in full detail in Sec.~\ref{ssc:d-scalar}
and \ref{ssc:eius-est-nolle}, respectively. As before, all terms
contributing to the distribution $d_{\mu\nu}^1$ can be represented by
Feynman diagrams with cuts -- for the complete list see e.g.
\cite{WusStrikeBack}. Just like in subsections.~\ref{ssc:d-scalar}
and~\ref{ssc:eius-est-nolle}, because of the kinematic constraints one
needs to consider only the cut separating the higgs vertex from the
photon vertices. All the appearing expressions have a very similar
structure to those that have been already considered in the
above-mentioned parts. Thanks to the presence of the cut, each
integral over the four-momentum flowing in the loop can be converted
into an integral over a sphere, which can be evaluated explicitly. We
stress the fact that, due to compactness of the region of integration,
the computation of the absorptive part does not involve any
regularization.

An important feature of electroweak theory in the $R_\xi$-gauges is
the fact that all interaction vertices have dimensions lower or equal
to four (because $\mathrm{dim}\,W^\mu=1$, in contrast to the value
$\mathrm{dim}\,W^\mu=2$ for the unitary gauge). In particular, a
straightforward power counting argument gives the upper bound
$\om\bigl(d^1_{\mu\nu}\bigr)\leq0$ for the singular order of the
off-shell distribution $d_{\mu\nu}^1$. Noting that $\tilde\rho=(k_1
k_2)/2M^2$ and $f(\tilde\rho)= O(\log\tilde\rho)$ we see that the
on-shell restriction of $d^1_{\mu\nu}(k_1,k_2)$ in
Eq.~\eqref{eq:absorptive_r_xi} grows logarithmically for big values of
$\tilde\rho$. For the off-shell~$d_{\mu\nu}^1$, this implies the
equality $\om\bigl(d^1_{\mu\nu} \bigr)=0$. This should be contrasted
with the bounds $6\geq\om\bigl(d_{\mu\nu} \bigr) \geq 2$ in the case
of the absorptive part computed in the unitary gauge.

The off-shell distribution $d_{\mu\nu}^1$ is again of the type
considered in Sec.~\ref{ssc:do-ut-des}. In particular, the method of
distribution splitting developed in subsection
\ref{ssc:hasta-ahi-podiamos-llegar} is applicable. For photons
on-shell with physical polarizations, the central solution reads
\begin{align}
&t^{1\,c}_{\mu\nu}(k_1,k_2) 
= -\frac{1}{2^3\,(2\pi)^6\,M^2}\,\x
\label{eq:txi-on} 
\\
&\biggl\{g_{\mu\nu} (k_1k_2) \biggl[ \Bigl(
-\frac{3}{\tilde\rho^2} + \frac{7}{\tilde\rho} -
\frac{\rho}{\tilde\rho^2}\Bigr) f(\tilde\rho) + \frac{3}{\tilde\rho} +
\frac{2\rho}{\tilde\rho} \biggr] - k_{1\nu} k_{2\mu} \biggl[ \Bigl(
-\frac{3}{\tilde\rho^2} + \frac{8}{\tilde\rho} -
\frac{2\rho}{\tilde\rho^2} \Bigr) f(\tilde\rho) + \frac{3}{\tilde\rho}
+ \frac{2\rho}{\tilde\rho} \biggr] \biggr\}.
\notag
\end{align} 
According to the postulate `Divergence degree' (in Sect.
\ref{ssc:porca-miseria}), we have to demand for the off-shell
$t^1_{\mu\nu}$ that
\begin{equation*}
\om\bigl(t^1_{\mu\nu} \bigr) = \om\bigl(d^1_{\mu\nu}\bigr) = 0.
\end{equation*}
This implies that the pertaining normalization freedom consists of a
constant term which is a tensor with two indices. By the Lorentz
invariance such a term has to be proportional to the metric tensor.
Consequently, the general off-shell solution of the splitting problem
is of the form
\begin{equation}
t^{1}_{\mu\nu}(k_1,k_2) = t^{1\,c}_{\mu\nu}(k_1,k_2) + g_{\mu\nu} D,
\end{equation} 
where $D$ is an arbitrary constant; note that this relation holds also
after restriction to on-shell photons with physical polarizations.

Observe that, in contrast to the unitary gauge, as long as the higgs
is off-shell, the distributions~\eqref{eq:absorptive_r_xi}
and~\eqref{eq:txi-on} are \textit{not} electromagnetically
gauge-invariant. This was to be expected and is related to the
presence of unphysical degrees of freedom in electroweak theory in the
$R_\xi$-gauges. However, entirely on-shell EGI can be satisfied:
setting $\tilde\rho := \rho$ in~\eqref{eq:txi-on}, we plainly get
\begin{equation}
t^1_{\mu\nu}(k_1,k_2)\bigr|_{\tilde\rho=\rho}
= - \frac{1}{2^3\,(2\pi)^6\,M^2}\, \bigl( 
P_{\mu\nu}(k_1,k_2)\, F_1(\rho) + D\,g_{\mu\nu} \bigr),
\label{eq:txi-onon} 
\end{equation}
and one sees that $D$ must be put equal to zero. This fixes completely
the normalization freedom in the construction of~$t_{\mu\nu}^1$ in the
Feynman gauge. At this level there is of course coincidence with the
result in \cite{HerreroMorales}, despite different game rules.

Recall that in the unitary gauge, for on-shell photons with physical
polarizations, the general normalization freedom fulfilling
electromagnetic gauge invariance and Lorentz covariance is given by
the last term in \eqref{eq:gratatio-capitis}, where $\om\equiv
\om(d_{\mu\nu})$. We stress that the constants~$C_k$ appearing in that
term cannot be fixed without imposing some further normalization
conditions. To address this problem, observe
that it is possible to adjust the coefficients $C_k$ of the polynomial
in the expression~\eqref{eq:gratatio-capitis} for $t_{\gi,\mu\nu}$ in
the unitary gauge in such a way that the following equality
\begin{equation}
t_{\gi,\mu\nu}(k_1,k_2)\bigr|_{\tilde\rho=\rho}
= t^1_{\mu\nu}(k_1,k_2)\bigr|_{\tilde\rho=\rho}.
\label{eq:gauindep} 
\end{equation}
holds entirely on-shell, i.e. for $\tilde\rho=\rho$. In fact, we must
set $C_0 := 2$ and $C_k := 0$ for all $k \geq 1$
in~\eqref{eq:gratatio-capitis}, which fixes completely the
normalization freedom of $t_{\gi,\mu\nu}$. Eq.~\eqref{eq:gauindep}
expresses the independence of the physical amplitude of the diphoton
decay of the higgs of the choice of the gauge. We
regard~\eqref{eq:gauindep} as a normalization condition of
time-ordered products. We have shown that this condition can be
satisfied in the case at hand and determines uniquely the
indeterminate normalization polynomial of~$t_{\gi,\mu\nu}$ in the
expression~\eqref{eq:gratatio-capitis}.

In summary, our final result for the entirely on-shell EW 
$h \to \ga\ga$ decay reads:
$$
t_{\mu\nu}(k_1,k_2)\bigr|_{\tilde\rho=\rho}
= - \frac{1}{2^3\,(2\pi)^6\,M^2}\, P_{\mu\nu}(k_1,k_2)\, F_1(\rho), 
$$
in agreement with the majority of the literature.

\subsection{On settling the controversy}
\label{ssc:dondedijedigo}

Should one infer that by computing in the ``physical'' unitary gauge
there is no way to entirely settle the controversy that motivates
this work, by removing the remaining ambiguity in determining the
amplitude in question? Not without at least pondering credible
``heavy-higgs'' (or $M \to 0$) and ``light-higgs'' (or $M \to \infty$)
arguments to bolster the case of $F_1(\rho)$ versus $F_1(\rho) - 2$,
that have been made in the literature. 

Now, for the present authors the question is not whether either class
of arguments is compelling enough. Instead, the question is whether
they can be made within the BEG prescriptions, and at the level of
rigour of this paper. The arguments in the first-named class involve
plays with field transformations, power counting rules and the
adiabatic limit that we find hard to countenance in the BEG formalism.

However, those of the second class are persuasive within our purview.
Note that $F(0)$, for both scalar and vector boson charged fields, as
well as for Dirac fermions, must coincide with (the first coefficient
of) the $\beta$-function series associated to \textit{electric charge
renormalization}.%
\footnote{$F_0(0)=-1/3$, which has been calculated in this paper, 
means precisely this.}
It was a fortunate historical fact that a calculation of the effective
Lagrangian for charged Proca particles~\cite{RussianCharge1} was
already available when the first ``exact'' computation of the higgs to
digamma process that we are aware of was performed
\cite{RussianCharge2} -- thus making possible a dependable
``light-higgs'' argument. A computation of the renormalization of the
electric charge of massive vector bosons in the unitary gauge by means
of BEG technology is in principle feasible -- cf. in this
respect~\cite[Sect.~7]{BDF09} and \cite{Duetsch15} -- and expected to
yield the required value $F_1(0)=7$. That would complete the analysis
of this paper, without going beyond the unitary gauge framework.

\section{Conclusion}
\label{sec:die-Eule}

Contrary to custom, we begin this section by declaring what we have
\textit{not} done in the paper. Finite QFT \textit{\`a~la}
Bogoliubov--Epstein--Glaser is mathematically a rigorous method. So,
referring to what is found in the literature -- like that cited in the
Introduction -- we have not employed dimensional regularization,
deemed an ``artifact'' by some. Nor do we borrow Pauli--Villars', nor
cutoff regularizations, for that matter. We did not have to practice
``judicious routings of the external momenta''~\cite{GastmansWuWu2},
nor adopt the ``loop regularization method''~\cite{TresChinosAgudos},
or any of the techniques to handle divergent integrals, resulting from
the blind application of Feynman graph technology on momentum space.
We do not pore over divergent integrals, at all. Each and every one of
the integrals appearing in this paper produces an unambiguous result;
each amplitude is finite.

We expected the BEG procedure to yield a conceptually clear
understanding of the EW $h \to \ga\ga$ decay in the unitary gauge. We
have succeeded in this -- at a price. According to Epstein and Glaser,
the adiabatic limit is to be performed \textit{after} distribution
splitting. Such an off-shell procedure for the $h \to \ga\ga$ decay in
the unitary gauge demands computations more than one order of
magnitude greater than the ones performed in this paper -- compare the
computation of the QED vertex function in~\cite[Chap.~3.8]{Scharf14}
and in~\cite{DuetschKS93}.

We were not disposed to inflict this on ourselves, nor our surviving
readers. Thus we were forced to innovate on the method, generalizing
the splitting dispersion integral to production of massless particles,
and showing that in the present situation the adiabatic limit may be
performed before distribution splitting. Only, then one may have to
add to the result so obtained an a~priori indeterminate polynomial in
the external momenta, of a degree given by the singular order of the
amplitude off-shell. It is precisely the addition of this polynomial
that is missing in references \cite{GastmansWuWu1,GastmansWuWu2} and
\cite{ChristovaI}. We have resolved the ambiguity by recourse to
gauge-fixing independence of the entirely on-shell amplitude.
Alternatively, the ambiguity could be resolved within the unitary
gauge in the BEG scheme, by invoking the low-energy argument.%
\footnote{Variants of the ``light-higgs'' or ``low energy'' argument
besides \cite{RussianCharge2,RussianCharge3} and
\cite{MelnikovVainshtein} are found for instance in
\cite[Ch.~24.8]{GranLev}, in \cite{KSpira} and in \cite{HStoehr}.}
We have not attempted here a rigorous proof of this argument, nor
computed the relevant coefficient of the beta function, leaving the
task for a separate analysis in future work.

\subsection*{Acknowledgements}

We are grateful to E. Alvarez, L. Alvarez-Gaum\'e, M. Herrero, C. P.
Mart\'in, J. C. V\'arilly and T. T. Wu for comments, discussions and
helpful remarks. We particularly thank I.~T.~Todorov for keen help in
the beginning, and his continued and thought-provoking, if contrarian,
interest in this work. As well we thank an anonymous referee for
knowledgeable reporting, definitely contributing to improve the paper.
During the inception and writing of this article, PD received funding
from the National Science Center, Poland, under the grant
UMO-2017/25/N/ST2/01012. He also gratefully acknowledges the
hospitality of the University of Zaragoza. JMG-B received funding from
the European Union's Horizon~2020 research programme under the Marie
Sk{\l}odowska-Curie grant agreement RISE~690575; from Project
FPA2015--65745--P of MINECO/Feder; from CERN; from the COST actions
MP1405 and~CA18108. Hospitality of CERN, IFT-Madrid, ITP-G\"ottingen
and ZiF-Bielefeld is gratefully acknowledged.


\appendix

\section{Notations and prerequisites}
\label{app:spin-one-basics}

Our Minkowski metric is mostly-negative. The Minkowski inner product
of two vectors $x \equiv x^\mu$, $p \equiv p^\nu$ is denoted with
parentheses: $(xp) = x^\mu p_\mu$. When (we hope) it does not cause
confusion, we often denote $p^2 = (pp)$.

We signal the standard formula for time-ordered $2$-point function:
\begin{equation}
\vev{\T \vf(x)\, \chi(x')} 
:= \frac{i}{(2\pi)^4} \int d^4p\,
\frac{e^{-i(p(x - x'))}}{p^2 - M^2 + i0}\, M^{\vf\chi}(p), 
\label{eq:pandemonium} 
\end{equation}
where $M^{\vf\chi}$ is the multiplier appearing in the corresponding
$2$-point function for the fields $\vf,\chi$ with the same mass~$M$.

\smallskip

\textbf{Propagators for a (complex) scalar field.} Clearly, for (say,
complex) \textit{scalar} fields the Feynman propagator
\begin{align}
\Delta^F(x-x') &:= \vev{\T \vf(x)\, \vf^\7(x')}
\label{eq:accidit-in-puncto} 
\\
\shortintertext{fulfils}
\square \Delta^F(x - x') &= -M^2 \Delta^F(x - x') - i\dl(x - x'),
\label{eq:post-factum} 
\end{align}
where $M$ is the mass of the $\vf$-field. Also, with $\th$ denoting
the Heaviside function, the Wightman functions
\begin{gather}
\Dl^+(x - x') := \vev{\vf(x) \vf^\7(x')} = \vev{\vf^\7(x) \vf(x')}
= \frac{1}{(2\pi)^3} \int d^4p\, \th(p^0) \,\dl(p^2 - M^2)
e^{-i(p(x - x'))}
\notag
\\
\word{so that} (\square + M^2)\Dl^+(x) = 0, \qquad 
\Dl^-(x) := -\Dl^+(-x),
\label{eq:nullius-in-verba} 
\end{gather}
are used in our calculations.

\smallskip

\textbf{Massive vector fields.}
A dreibein $e_r(p)$ on Minkowski momentum space, with the properties:
$$
\bigl( e_r(p)\, e_s(p) \bigr) = - \dl_{rs} \word{for} r,s = 1,2,3;
\qquad
\bigl( p\, e_r(p) \bigr) = 0,
$$
describes polarization states for particles with squared mass 
$M^2 = p^2 > 0$ and spin $j = 1$. From the above identities, one
derives the projector formula:
\begin{equation}
\sum_{r=1}^3 e_r^\mu(p) e_r^\nu(p) 
= - g^{\mu\nu} + \frac{p^\mu p^\nu}{M^2}.
\label{eq:Proca-twiner} 
\end{equation}
The set $e$ is regarded as an intertwiner matrix mapping the natural
representation space of the Lorentz group onto the representation
space~$\bC^3$ for spin~$1$ objects. Let $a_r^\7(p)$ and $a_r(p)$ be
respectively the creation and annihilation operators on the boson Fock
space for such particles -- whose $1$-particle subspace is the
corresponding Wigner unirrep space; and $b_r^\7(p)$ and $b_r(p)$ for
their antiparticles.

There is a quantum vector field acting on that space given by the
formula
\begin{equation}
W^\mu(x) := \sum_r \int d\mu(p)\,
\bigl[ e^{i(px)} e_r^\mu(p)\, b^\7_r(p) 
+ e^{-i(px)} e_r^{\mu*}(p)\, a_r(p) \bigr];
\label{eq:Proca-W-field} 
\end{equation}
In \eqref{eq:Proca-W-field} and in other formulas $d\mu(p)$ denotes
the usual invariant measure 
$d^3\pp/2E(p) = d^3\pp/\sqrt{m^2 + |\pp|^2}$ over the mass hyperboloid
$H_M^\pm := \{p\in\bM\,\vert\, p^2 = M^2\wedge\pm p^n > 0\}$. By its
definition, the charged \textit{Proca} field $W$ is divergenceless:
$(\del W) = 0$. Its equations of motion can be variously written as
\begin{equation}
(\square + M^2) W^\mu = (\square + M^2) W^\mu - \del^\mu(\del W) =
\del_\nu G^{\nu\mu}(x) + M^2 W^\mu = 0,
\label{eq:Proca-eqn} 
\end{equation}
where $G^{\mu\nu} := \del^\mu W^\nu - \del^\nu W^\mu$.

The theory of massive vector fields is a gauge theory \cite{Pauli41,
RueggRA04}, its Proca version being a ``unitary gauge'' for it. It has
been analyzed, in terms parallel to Maxwell field theory,
in~\cite{Stora05}; wherein the associated BRST machinery is
``deconstructed'' in terms of Koszul cohomology.

The high-energy limit of $(-g^{\mu\nu} + p^\mu p^\nu/M^2)/(p^2 - M^2)$
apparently signals quadratic divergences and trouble with unitarity of
the scattering matrix: cross-sections would appear to grow without
bound due to the longitudinal momentum states. The difficulty lies
with the closure relation~\eqref{eq:Proca-twiner} of the intertwiners
$e_r$, whose dimension does not allow the standard sufficiency
criterion for renormalizability. This is usually ``cured'' nowadays by
the cohomological extension of the Wigner representation space for
massive spin-$1$ particles into spaces populated by Faddeev-Popov 
ghosts and anti-ghosts and St\"uckelberg fields. 

In this paper we work mainly with the Proca field (i.e., we use the
unitary gauge), where these additional unphysical fields do not
appear; the apparently bad UV-behaviour of the propagators is under
control, as we verify, thanks to amazing cancellations in the
amplitudes.

\smallskip

\textbf{Propagators for the EW theory in the unitary gauge.}
We will make frequent use~of
\begin{align}
\Dl^\al_\bt(x - x') := \vev{\T W^\al(x)\, W_\bt^\7(x')}
= -(g^\al_\bt + \del^\al\del_\bt/M^2) \Delta^F(x - x'),
\label{eq:cum-grano-salis} 
\end{align}
where $M$ is the mass of the $W$-field, and its properties:
$$
\square\Dl^\al_\bt 
= -M^2\Dl^\al_\bt + i(g^\al_\bt + \del^\al\del_\bt/M^2)\dl; \quad
\del_\mu\Dl^\mu_\nu = i\del_\nu\dl/M^2.
$$
The corresponding formulas for the Wightman functions respectively
read:
\begin{gather*}
\Dl^{\al\,+}_\bt(x - x') := \vev{W^\al(x)\, W_\bt^\7(x')}
= \vev{W^{\al\,\7}(x)\, W_\bt(x')} 
= -(g^\al_\bt + \del^\al\del_\bt/M^2) \Dl^+(x - x')
\\
\word{and}
\square\Dl^{\al\,+}_\bt = - M^2 \Dl^{\al\,+}_\bt, \quad
\del_\mu\Dl^{\mu\,+}_\nu = 0.
\end{gather*}
We will invoke also the Maxwell-like fields, where $\# = \7$ or
naught,
$$
F^{\mu\nu} := \del^\mu A^\nu - \del^\nu A^\mu;  \qquad 
G^\#_{\mu\nu} := \del_\mu W^\#_\nu - \del_\nu W^\#_\mu,
$$
and introduce the propagator
\begin{align}
D^{\al\mu}_{\bt\rho}(x - x') 
&:= \vev{\T G^{\al\mu}(x)\, G^\7_{\bt\rho}(x')} 
= \vev{\T G^{\al\mu\,\7}(x)\, G_{\bt\rho}(x')}
\notag \\
&= \vev{\T(\del^\al W^\mu(x) - \del^\mu W^\al(x))
(\del_\bt W^\7_\rho(x') - \del_\rho W^\7_\bt(x'))}
\notag \\
&= -\del^\mu\bigl( \del_\rho\Dl^\al_\bt(x - x') 
- \del_\bt\Dl^\al_\rho(x - x') \bigr)
+ \del^\al\bigl( \del_\rho\Dl^\mu_\bt(x - x')
- \del_\bt\Dl^\mu_\rho(x - x') \bigr)
\notag \\
&= (g^\al_\bt \del^\mu_\rho - g^\al_\rho \del^\mu_\bt
- g^\mu_\bt \del^\al_\rho + g^\mu_\rho \del^\al_\bt) \Delta^F(x - x'),
\quad \del^\al_\rho := \del^\al \del_\rho,
\label{eq:rarior-albo} 
\end{align}
since in
\begin{equation}
\del_\rho \Dl^\al_\bt - \del_\bt \Dl^\al_\rho 
= (-g^\al_\bt \del_\rho + g^\al_\rho \del_\bt) \Delta^F
\label{eq:cancel} 
\end{equation}
the terms with three derivatives \textit{cancel out}, due to the
antisymmetry of~$G_{\mu\nu}^\#$. That fact is relevant in this paper.
Analogously we obtain
\begin{align*}
D^{\al\mu\,+}_{\bt\rho}(x - x') 
&:= \vev{G^{\al\mu}(x)\, G_{\bt\rho}^\7(x')} 
= \vev{G^{\al\mu\,\7}(x)\, G_{\bt\rho}(x')}
\\
&= (g^\al_\bt \del^\mu_\rho - g^\al_\rho \del^\mu_\bt
- g^\mu_\bt \del^\al_\rho + g^\mu_\rho \del^\al_\bt) \Dl^+(x - x'),
\end{align*}
without third-order derivatives. We also note that
$$
\del_\mu D^{\al\mu}_{\bt\rho} 
= (g^\al_\bt \del_\rho - g^\al_\rho \del_\bt)\square \Delta^F 
= (-g^\al_\bt \del_\rho + g^\al_\rho \del_\bt)(M^2 \Delta^F + i\dl),
$$
since third-order derivatives appear only in the form 
$\del\square \Delta^F$, removable with the help 
of~\eqref{eq:post-factum}.

For the $2$-point functions with one $G^\#$ plus one $W^\#$, we use
\eqref{eq:cancel} to get rid of the terms with three derivatives. For
time-ordered ones we obtain
\begin{align}
&\vev{\T W^\mu(x)\, G^\7_{\al\nu}(x')} = \vev{\T W^{\mu\,\7}(x)\,
G_{\al\nu}(x')}
\label{eq:extrema-exquisita} 
\\
&= -(\del_\al\Dl^\mu_\nu(x - x') - \del_\nu\Dl^\mu_\al(x - x'))
= (g^\mu_\nu \del_\al - g^\mu_\al \del_\nu) \Delta^F(x - x'),
\notag 
\\
&\vev{\T G^\7_{\al\nu}(x)\, W^\mu(x')} = \vev{\T G_{\al\nu}(x)\,
W^{\mu\,\7}(x')} = (-g^\mu_\nu \del_\al + g^\mu_\al \del_\nu)
\Delta^F(x - x').
\notag
\end{align}
With the parallel Wightman functions we proceed similarly:
\begin{align}
\vev{W^\mu(x)\, G^\7_{\al\nu}(x')}
&= \vev{W^{\mu\,\7}(x)\, G_{\al\nu}(x')}
 = (g^\mu_\nu \del_\al - g^\mu_\al \del_\nu) \Dl^+(x - x'),
\notag
\\
\vev{G^\7_{\al\nu}(x)\, W^\mu(x')}
&= \vev{G_{\al\nu}(x)\, W^{\mu\,\7}(x')}
 = (-g^\mu_\nu \del_\al + g^\mu_\al \del_\nu) \Dl^+(x - x').
\label{eq:remedia-optima-sunt} 
\end{align}
Comparing with the Feynman gauge, in which the $W^\#$ two-point
functions $\Dl^\al _\bt$ and $\Dl^{\al\,+}_\bt$ are replaced by
$-g^\al_\bt \Delta^F$ \eqref{eq:T(WW)-1}
and $-g^\al_\bt\Dl^+$ \eqref{eq:WW-1}, respectively, we find the
$W^\#G^\#$, $G^\#W^\#$ and $G^\#G^\#$ two-point functions to be the
same, thanks to the cancellations in~\eqref{eq:cancel}.

\smallskip

\textbf{Propagators for the EW theory in the Feynman gauge.}
The Feynman propagator and the Wightman two-point function for the
$W$-field in the Feynman gauge read
\begin{align}
\vev{\T W^\al(x)\, W_\bt^\7(x')}
&= -g^\al_\bt \Delta^F(x - x'),\label{eq:T(WW)-1}
\\
\vev{W^\al(x)\, W_\bt^\7(x')}
= \vev{W^{\al\,\7}(x)\, W_\bt(x')} 
&= -g^\al_\bt \Delta^+(x - x').\label{eq:WW-1}
\end{align}
Besides the $W$-field the computation from Sec.~\ref{ssc:ita} involves
the Stückelberg fields $\varphi^\pm$ and the ghost and anti-ghost
fields $C^\pm,\bar C^\pm$ -- where $\phi^\pm:=\frac{1}{\sqrt{2}}
(\phi_1\pm i\phi_2\bigr)$ for $\phi=\varphi,\,C,\,\bar C$. Below we
list the non-vanishing Feynman propagators and two-point functions for
these fields:
\begin{align}
\vev{\T \varphi^+(x)\, \varphi^-(x')}
&= \Delta^F(x - x'),
\\
\vev{\varphi^+(x)\, \varphi^-(x')}
=\vev{\varphi^-(x)\, \varphi^+(x')}
&=\Delta^+(x - x'),
\\
\vev{\T C^+(x)\, \bar C^-(x')}
=\vev{\T C^-(x)\, \bar C^+(x')}
&= \Delta^F(x - x'), 
\\
\vev{C^+(x)\, \bar C^-(x')}
=-\vev{\bar C^-(x)\, C^+(x')}
&=\Delta^+(x - x'),
\\
\vev{C^-(x)\, \bar C^+(x')}
=-\vev{\bar C^+(x)\, C^-(x')}
&=\Delta^+(x - x').
\end{align}

\section{An interesting distribution}
\label{app:curioser}

In this appendix we study the distribution $f(\rho)$ appearing in the
amplitude of the $h \to \ga\ga$ decay via both scalar QED and
flavourdynamics.

To define $\sqrt{\.} \colon \bC \to \bC$ and $\log \colon \bC \to \bC$
one uses a cut on the negative real axis:
$$
\sqrt{r\,e^{i\vf}} = \sqrt{r}\,e^{i\vf/2}, \quad
\log{re^{i\vf}} = \log{r} + i\vf, 
\quad\text{both with } \vf \in (-\pi,\pi].
$$
The complex function
\begin{equation}
\tilde f : \begin{cases}
\bC \less \bigl( (-\infty,0) \cup (1,\infty) \bigr) \longto \bC
\\
z \longmapsto -\bigl( \log(\sqrt{1 - z} + i\sqrt{z}) \bigr)^2
\end{cases}
\label{eq:tilde-f} 
\end{equation}
is analytic, in view of the two cuts on the real axis. The
distribution $f(\rho)$ is defined by
\begin{equation}
f : [0,\infty) \longto \bC : \rho \longmapsto f(\rho)
:= \tilde f(\rho + i0).
\label{eq:f} 
\end{equation} 
We claim that
\begin{align}
f(\rho) &= (\arcsin\sqrt{\rho}\bigr)^2 
= \biggl[\arctan \frac{\rho}{\sqrt{1 - \rho^2}} \biggr]^2 
\word{for} 0 \leq \rho \leq 1,
\label{eq:f1} 
\\
f(\rho) &= - \frac{1}{4}\, \Biggl[
\log \frac{1 + \sqrt{1 - \rho^{-1}}}{1 - \sqrt{1 - \rho^{-1}}} -i\pi
\Biggr]^2 \word{for} \rho \geq 1,
\label{eq:f2} 
\end{align}
from which one easily obtains the following formula for the imaginary
part:
\begin{equation}
\Im f(\rho) = \th(\rho - 1)\,\frac{\pi}{2} 
\log\frac{\sqrt{\rho} + \sqrt{\rho-1}}{\sqrt{\rho} - \sqrt{\rho-1}} 
= - \th(\rho - 1)\,\frac{\pi}{2}
\log\bigl( 2\rho - 2\sqrt{\rho(\rho - 1)} - 1 \bigr).
\label{eq:consilia-non-sentis} 
\end{equation}
 
The first claim \eqref{eq:f1} follows immediately from the identity
$$
\arcsin\sqrt{\rho} = -i\log\bigl( \sqrt{1-\rho} + i\sqrt{\rho} \bigr)
\word{for} \rho \in [0,1],
$$
which is obvious from $\exp(i\arcsin x) = \sqrt{1 - x^2} + ix, \quad
|x| \leq 1$.

To prove the second claim \eqref{eq:f2}, first note that one has
$\sqrt{1 - (\rho + i0)} = -i\sqrt{\rho - 1}$ for $\rho \geq 1$. Hence,
there holds:
\begin{align*}
&\log\bigl( \sqrt{1 - (\rho + i0)} + i\sqrt{\rho} \bigr) = \log\bigl(
\sqrt{\rho} - \sqrt{\rho - 1} \bigr) + i\pi/2 = \frac{1}{2} \bigl(
\log\bigl( (\sqrt{\rho} - \sqrt{\rho - 1})^2 \bigr) + i\pi \bigr)
\\
&= \frac{1}{2} \biggl( \log \frac{\sqrt{\rho} - \sqrt{\rho - 1}}
{\sqrt{\rho} + \sqrt{\rho - 1}} + i\pi \biggr) - \frac{1}{2} \biggl(
\log \frac{1 + \sqrt{1 - \rho^{-1}}} {1 - \sqrt{1 - \rho^{-1}}} - i\pi
\biggr),
\end{align*}
from which assertion \eqref{eq:f2} follows.

We point out that, for $\rho\in [0,1]$, in the distribution 
$F_0(\rho) = \rho^{-1} \bigl( 1 - \rho^{-1} f(\rho) \bigr)$ in
Eq.~\eqref{eq:magister-dixit} the terms $\sim \rho^{-1}$ cancel. We 
bring in the power series expansion
\begin{align}
\arcsin x &= x + \frac{x^3}{2\.3} + \frac{3\,x^5}{2\.4\.5}
+ \frac{3\.5\,x^7}{2\.4\.6\.7} +\cdots \word{for} |x|\leq 1,
\word{yielding}
\notag
\\
f(\rho) &= (\arcsin \sqrt{\rho}\,)^2 = \rho + \frac{\rho^2}{3} +
\frac{8\rho^3}{45} +\cdots \word{so that} F_0(\rho) = - \frac{1}{3} -
\frac{8}{45}\,\rho +\cdots\,.
\label{eq:arcsin-2} 
\end{align}

\section{Bogoliubov--Epstein--Glaser normalization}
\label{sec:Streu}

Epstein and Glaser \cite{EpsteinGlaser73, EpsteinGlaser76} started
from Bogoliubov's functional $\bS[g]$-matrix
\cite[Sect.~21]{BogoliubovS80}, based on \cite{Bogoliubov55} and on
previous work by St\"uckelberg and Rivier \cite{StueckelbergR50}. That
is an expansion of operator-valued distributions (OVD) on
configuration space, of the form
\begin{equation}
\bS[g] = 1 + \sum_{n=1}^\infty \frac{i^n}{n!} 
\int d^4x_1\cdots d^4x_n\, T_n(x_1,\dots,x_n)\, g(x_1) \cdots g(x_n),
\quad g \in \sS(\bR^4,\bR).
\label{EG-summacumlaude} 
\end{equation}
We have taken $\hbar = 1$. The $g$'s are multiplets of coupling
functions which work as adiabatic cutoffs. The $T_n$,
\textit{symmetric} in their arguments, are identified with
chronological or time-ordered $n$-products. This is Bogoliubov's
version of the summands in the formal Dyson expansion for the
scattering matrix in the interaction picture. One tries to recursively
build the $T_n$ from natural postulates: the ultraviolet problem is
solved in that construction. In the ``adiabatic limit''
$g\!\uparrow\!1$ the functional scattering matrix
\eqref{EG-summacumlaude} is expected to converge to the physical~$\bS$
in suitable senses~\cite{MuyDucho}.

\subsection{The Epstein--Glaser postulates}
\label{ssc:porca-miseria}

\begin{description}
	
\item[Beginning of induction]
The procedure is perturbative, the basic building blocks being finite
sets of quantum free fields on their corresponding Fock spaces.
Precisely, $T_1(x)$ is a Wick polynomial in those and their
derivatives -- a well-defined OVD.%
\footnote{One can think of $T_1$ as an ``interaction Lagrangian''.
However, the Lagrangian mindset is inessential here.}
The coupling constants of the model are included in the~$T_n$, the
expansion being a power series on them. The other postulates shall
enable us to construct the $T_n$ from~$T_1$ by induction on~$n$.

\item[Causality]
This is the key requirement, for which the Epstein--Glaser
manufacturing of TOPs is also called ``causal perturbation theory''.
Let $V_\pm$ and $\ovl{V}_\pm$ respectively denote the open forward and
backward lightcones and their closures. If $g_1,g_2$ are such that
\begin{align*}
&\supp g_2 \cap \bigl( \supp g_1 + \ovl{V}_- \bigr) = \emptyset,
\word{then}  \bS[g_1 + g_2] = \bS[g_2]\,\bS[g_1];
\\
&\text{equivalently,} \quad T_n(x_1,\dots,x_n) = T_r(x_1,\dots,x_r)\,
T_{n-r}(x_{r+1},\dots,x_n) \quad \text{whenever}
\\
&\{x_1,\dots,x_r\} \cap \bigl( \{x_{r+1},\dots,x_n\} + \ovl{V}_-
\bigr) = \emptyset, \text{for all $r$ and $n$ with $1 \leq r \leq n -
1$.}
\end{align*}
This is a powerful postulate, called \textit{causal factorization}. It
means that on large open sets of the $n$-point Minkowski space
$(\bM_4)^{\x n} \equiv \bM^n$ the TOP $T_n$ can be built up from its
lower-order counterparts. In the inductive step of the Epstein--Glaser
method, this requirement uniquely determines $T_n$ on the set of
Schwartz functions $\sS(\bM^n \less \Dl_n)$, in terms of the given
$T_k$ at lower orders $k \leq n - 1$, where $\Dl_n$ is the ``thin''
diagonal $\Dl_n := \set{(x_1,\dots,x_n) : x_1 = x_2 =\cdots = x_n}$.
Perturbative normalization is the \textit{extension} of the
operator-valued distribution $T_n$ from $\sS'(\bM^n \less \Dl_n)$ to
$\sS'(\bM^n)$. The gist of BEG normalization is that in local quantum
field theory this problem finds a solution, the induction process
going through. So there is no need to deal with infinities. The
solution of the extension problem is non-unique: in principle one may
add any OVD which is supported on~$\Dl_n$. All further postulates of
Epstein--Glaser have the purpose of giving guidance for this problem;
hence they may be called ``normalization conditions''.

\item[Causal Wick expansion]
The TOPs are required to satisfy the Wick expansion formula. We
display the latter in terms of the interaction $T_1(x) = \vf^k(x)$,
for $\vf$ a real scalar field:
\begin{align*}
& T_n\bigl(\vf^k(x_1),\dots,\vf^k(x_n)\bigr)
\\
&\enspace
= \sum_{l_1,\dots,l_n=0}^k \binom{k}{l_1} \cdots \binom{k}{l_n}
\vev{T_n(\vf^{k-l_1}(x_1),\dots,\vf^{k-l_n}(x_n))}
\,\vf^{l_1}(x_1) \cdots \vf^{l_n}(x_n)
\end{align*}
with $\vev{\cdots}$ denoting vacuum expectation value. This postulate
reduces the extension problem for the OVD $T_n(\cdots)$ to one of
numerical distributions -- a simpler task.

\item[Poincar\'e Covariance]
Let there be given the standard lifting $U(a,\La)$ to Fock space of
the Poincar\'e unitary irreducible representations (unirreps) on
$1$-particle subspaces. Then
$$
U(a,\La)\,\bS[g]\,U^\7(a,\La) = \bS\bigl[ (a,\La)\.g \bigr],
$$
where $((a,\La)\.g)(x) = g(\La^{-1}(x - a))$.
In particular, translation invariance implies that the coefficients in
the causal Wick expansion depend only on the relative coordinates.
Therefore, the extension problem for the numerical distributions is
step by step simplified to an extension to one point, namely from
$\sS'(\bR^{4(n-1)}\less\{0\})$ to $\sS'(\bR^{4(n-1)})$.

\item[Unitarity \textup{(conservation of probability)}]
\begin{align*}
&\bS[g]\,\bS^\7[g] = \bS^\7[g]\,\bS[g] = 1; \quad \text{here we denote:}
\\
&\bS^{-1}[g] =: 1 + \sum_{n=1}^\infty \frac{(-i)^n}{n!} \int
d^4x_1\cdots d^4x_n\, \ovl{T}_n(x_1,\dots,x_n)\, g(x_1)\cdots g(x_n).
\end{align*}

\item[Divergence degree]
Heuristically, this is the requirement that normalization does not
make the $T$-product ``more singular'' (in the UV-region). This is
expressed in terms of the \textbf{scaling degree} of the coefficients
(i.e., the numerical distributions) in the causal Wick expansion of
the $T$-product: that degree may \textit{not} be increased by the
extension. The standard definitions of the scaling degree $\sd(t)$ and
the singular order $\om(t)$ of a distribution $t \in \sS'(\bR^k)$ or
$t \in \sS'(\bR^k \less \{0\})$ -- see, e.g.,
\cite[Sect.~3.2.2]{Duetsch19} -- are as follows:
\begin{equation}
\sd(t) := \inf\set{r \in \bR : \lim_{\la\downto 0} \la^r\,t(\la x) =
0}, \quad \om(t) := \sd(t) - k,
\label{eq:sd} 
\end{equation}
where $\inf\,\emptyset := \infty$ and $\inf\,\bR := -\infty$. For
instance, for a translation-invariant distribution 
$d(x_1 - x_3, x_2 - x_3) \in \sS'(\bR^8)$ fulfilling $\sd(d) = 8$, 
equivalently $\om(d) = 0$, we say that the amplitude superficially is
``logarithmically divergent''.

\item[Other invariance rules and physical requirements]
Discrete symmetries can be accomodated in the Epstein--Glaser
construction~\cite{Kleopatra}. A Ward identity playing a paramount
role in this paper corresponds to EGI -- see subsections
\ref{ssc:d-scalar} and~\ref{ssc:eius-est-nolle} for this. For
different types of requirements, consult subsections \ref{ssc:ita} and
\ref{ssc:dondedijedigo}.

\end{description}

\subsection{Iterative building of the time-ordered products}
\label{ssc:lost-in-translation}

To assemble the $T_n$ outside of the thin diagonal $\Dl_n$ from the
inductively known $(T_k)_{1\leq k\leq n-1}$ directly by causal
factorization, one would need a partition of unity subordinate to an
open cover of~$\bM^n \less \Dl_n$ -- see \cite{BrunettiF00} and
\cite[Sect.~3.3]{Duetsch19}. This is problematic for practical
computations. For this reason the original Epstein--Glaser
construction \cite{EpsteinGlaser73, Scharf14} is less direct: it
introduces an intermediate $D_n$-distribution having causal support;
and the crucial step is the \textit{splitting} of $D_n$ into its
advanced and retarded parts. This splitting corresponds precisely to
the above-mentioned extension problem, that is, to perturbative
normalization. A decisive advantage of the method is that the problem
is solved in momentum space by a dispersion integral.

To explain the construction, we first express the antichronological
product $\ovl{T}_n$ in terms of the TOPs $(T_k)_{1\leq k\leq n}$. Let
$N = \{x_1,\dots,x_n\}$ and $I \subseteq N$ with $|I| \neq 0$
elements. Define $T_{|I|}(I) = T_{|I|}(x_i : x_i\in I)$. By the
standard inversion of a formal power series with noncommuting terms in
terms of set compositions, we obtain
\begin{equation}
\ovl{T}_{|N|}(N) = \sum_{k=1}^n (-)^{n+k} \sum_{I_1\uplus\cdots\uplus
I_k=N} T_{|I_1|}(I_1) \cdots T_{|I_k|}(I_k),
\label{eq:full-house} 
\end{equation}
where the disjoint union is over nonempty \textit{blocks}~$I_j$.
The terminology of antichronological products is appropriate, since
if $I \cap (J + \ovl{V}_-) = \emptyset$, then 
$\ovl{T}(I \cup J) = \ovl{T}(J)\,\ovl{T}(I)$.

Retarded and advanced products, denoted by $R_n$ and $A_n$
respectively, are the coefficients in the perturbative expansion of
the respective retarded and advanced interacting fields. For them we
follow the convention in the book~\cite{Duetsch19}, identical to that
of~\cite{EpsteinGlaser73} except that $R_n$ and~$A_n$ have an extra
factor~$i^{n-1}$. In general, Bogoliubov's definitions read:
\begin{align}
R_{n+1}(x_1,\dots,x_{n+1}) 
&:= i^n \sum_{I\subset\{1,\dots,n\}} (-1)^{|I|}\,
\ovl{T}_{|I|}(I)\,T_{|I^c|+1}(I^c, x_{n+1}),
\label{eq:quod-non-speratur1} 
\\
A_{n+1}(x_1,\dots,x_{n+1}) 
&:= i^n \sum_{I\subset\{1,\dots,n\}} (-1)^{|I|}\,
T_{|I^c| + 1}(I^c,x_{n+1})\,\ovl{T}_{|I|}(I),
\label{eq:quod-non-speratur2} 
\end{align}
where $I^c := \{1,\dots,n\} \less I$. Epstein and Glaser
\cite{EpsteinGlaser73} prove that $A_{n+1},\,R_{n+1}$ have advanced or
retarded support, respectively:
\begin{align*}
\supp A_{n+1} &\subseteq 
\set{x \in \bM^{n+1} : x_j - x_{n+1} \in \ovl{V}_+ \ \forall j};
\\
\supp R_{n+1} &\subseteq 
\set{x \in \bM^{n+1} : x_j - x_{n+1} \in \ovl{V}_- \ \forall j}.
\end{align*}
In the induction step $n \to n+1$ neither the $T_{n+1}$ nor the
$R_{n+1}$ nor the $A_{n+1}$ are known. But by the induction hypothesis
the \textit{difference} $D_{n+1}$, defined by $D_{n+1} := A_{n+1} -
R_{n+1}$, only \textit{depends on known quantities}. For instance, in
$D_3$ the unknown~$T_3$ has dropped out -- and $\ovl{T}_1$,
$\ovl{T}_2$ are uniquely given in terms of $T_1$ and~$T_2$. It follows
that $D_{n+1}$ has causal support:
$$
\supp D_{n+1}  \subseteq 
\set{x \in \bM^{n+1} : x_j - x_{n+1} \in \ovl{V}_+ \ \forall j}
\cup \set{x \in \bM^{n+1} : x_j - x_{n+1} \in \ovl{V}_- \ \forall j}.
$$
If one finds a way to extract the advanced part $A_{n+1}$
of~$D_{n+1}$, that is, to \textit{split} the OVD $D_{n+1}$ into
$A_{n+1}$ and $-R_{n+1}$ in such a way that the latter two satisfy the
just given support properties, then one can construct a candidate for
$T_{n+1}$.%
\footnote{That sometimes needs to be symmetrized, by adding a 
suitable OVD supported on $\Dl_{n+1}$.}

For the sake of normalization conditions, at this stage we may add to
$T_{n+1}$ any OVD supported on $\Dl_{n+1}$ which is symmetric in
$x_1,\dots,x_{n+1}$. The $D_{n+1}$ fulfils all the normalization
conditions, in particular the `Causal Wick expansion' and `Translation
invariance', because of the validity of those for the inductively
given $(T_k)_{1\leq k\leq n}$. Therefore, the splitting problem for
$D_{n+1}$ translates into a consonant problem for the coefficients
$d(x_1 - x_{n+1},\dots, x_n - x_{n+1}) \in \sS'(\bR^{4n},\bC)$ in the
Wick expansion of $D_{n+1}$, yielding $a,r(x_1 - x_{n+1},\dots, x_n -
x_{n+1})\in \sS'(\bR^{4n},\bC)$, which are the coefficients in the
Wick expansion of $A_{n+1}$ and $R_{n+1}$, respectively.

\textit{In fine}, by the induction process, one specifies the
ambiguity in the vacuum expectation value of each~$T_{n+1}$ by adding
to it a \textit{contact term}, that is,
\begin{equation}
t(x_1 - x_{n+1},\dots, x_n - x_{n+1}) 
+ \sum_{|a|\leq\om} c_a\,\del^a\dl(x_1 - x_{n+1},\dots,x_n- x_{n+1}),
\label{eq:plus-actum} 
\end{equation}
where $\om$ is the singular order of the pertinent $d(x_1 -
x_{n+1},\dots)$ and the coefficients $c_a \in \bC$ depending on the
multi-index $a$ \textbf{are arbitrary}, up to restrictions coming from
the `Poincar\'e covariance' and `Other invariance rules' requirements.

\subsection{Dispersion integrals from splitting in BEG normalization:
the central solution}
\label{sec:ipso-facto}

For simplicity, here we restrict ourselves to the case of two
four-variables, relevant for this paper.
For the Fourier transform of $f \in \sS\bigl(\bR^8\bigr)$ we employ
the following convention:
\begin{align}
f(y_1,y_2) = (2\pi)^{-4} \int dk_1\,dk_2\, e^{-i(k_1y_1 + k_2y_2)}\,
\hat f(k_1,k_2).
\label{eq:def_Fourier} 
\end{align}
Let $\Ga_\pm := \ovl V_\pm \x \ovl V_\pm$ henceforth. Given a ``causal
distribution'', that is, $d \in \sS'(\bR^8)$ with
\begin{equation}
\supp d \subseteq \Ga_+ \cup \Ga_- \word{and} \sd(d) < \infty,
\label{eq:caus-d} 
\end{equation} 
by a \textit{splitting solution} of~$d$ we mean a distribution $a \in
\sS'(\bR^8)$ with
\begin{equation}
(a - d)\bigr|_{\sS(\bR^8 \less \Ga_-)} = 0, \quad
\supp a \subseteq \Ga_+ \word{and} \sd(a) \leq \sd(d).
\label{eq:split} 
\end{equation}

In what follows we assume that the Fourier transform $\hat d$ of the
causal $d$-distribution we wish to split vanishes in an open ball $\sR
\subset \bR^8$ centered at $k = 0$. This holds if all propagators
contributing to $d$ are massive, as it is the case in this paper --
see \cite[Sect.~5.2]{EpsteinGlaser73}. Also in \cite{EpsteinGlaser73}
it is shown for any splitting solution~$a$ that $\hat d|_{\sR}=0$ entails
analyticity of $\hat a(k)$ on $\sR$. In this case there exists a
distinguished splitting solution, the so-called
\textit{central~solution} $a^c$, characterized by the conditions
\begin{equation}
\del^a\hat a^c(0) = 0, \word{for all} |a| \leq \om(d).
\label{eq:central-sol} 
\end{equation}
As indicated in Eq.~\eqref{eq:plus-actum}, for $\sd(d) \geq 8$ --
i.e., for $\om(d) \geq 0$, as defined in Eq.~\eqref{eq:sd} -- the
splitting solution of~$d$ is not uniquely determined. Any two
solutions $a_1$ and~$a_2$ differ by
$$
a_1(y) - a_2(y) = \sum_{|a|=0}^{\om(d)} C_a\,\del^a\dl(y)
\word{or equivalently,} \hat a_1(k) - \hat a_2(k) 
= \frac{1}{(2\pi)^4} \sum_{|a|=0}^{\om(d)} C_a\,(-ik)^a,
$$
with arbitrary constants $C_a \in \bC$.

Essential for dealing with our situation is that the central solution
of the splitting problem in momentum space \textit{can be computed by
a dispersion integral}. Now we sketch the derivation of a few
versions of this distinguished splitting integral.%
\footnote{For further detail we refer to \cite[Sect.~3.2]{Scharf14},
which relies on \cite[Sect.~6.5]{EpsteinGlaser73}.}
The naive way to extract the advanced part $a$ of $d$ is to multiply
the latter by a $\th$-function:
$$
a_\naive(y_1,y_2) := d(y_1,y_2)\,\chi(y_1,y_2) \word{with}
\chi(y_1,y_2) := \th\bigl( (y_1v_1) + (y_2v_2) \bigr),
$$
where $v := (v_1,v_2) \in V_+ \x V_+$ is arbitrary. But for $\sd(d)
\geq 8$ the pointwise product $d\chi$ exists only as an element of
$\sS'(\bR^8 \less \{0\})$. Therefore, the splitting problem \textit{is
an extension problem}: we have to extend $d\chi \in \sS'(\bR^8 \less
\{0\})$ to an $a \in \sS'(\bR^8)$ such that $\sd(a) = \sd(d\chi) =
\sd(d)$.%
\footnote{A priori, it might happen that $\sd(d\chi) < \sd(d)$; but in
the applications to Epstein--Glaser normalization known to us one
always finds $\sd(d\chi) = \sd(d)$. Hence we assume the latter
relation to hold true.}

The problem is studied in its particulars in
\cite[Sect.~3.2.2]{Duetsch19}. Given $a$ with singular order~$\om$,
there exists an obvious extension $a^\om$, belonging in the dual space
$\sS_\om'(\bR^8)$ of
$$
\sS_\om(\bR^8) := \set{f \in \sS(\bR^8)
: \del^b f(0) = 0~~\text{for all}~~|b| \leq \om},
$$
uniquely determined by the requirement that $\sd(a^\om) = \sd(d)$.
Next, a projection is introduced:
\begin{equation}
W_\om : \sS(\bR^8) \longto \sS_\om(\bR^8); \qquad W_\om f(y) := f(y) -
w(y) \sum_{|b|=0}^\om \frac{y^b}{b!}\, \del^b f(0),
\label{eq:shinto} 
\end{equation}
where the suitably decaying function $w$ must fulfil $w(0) = 1$ and
$\del^b w(0) = 0$ for $1 \leq |b| \leq \om$. One verifies that a
solution $a_w$ (depending on the choice of the function~$w$) of the
splitting problem \eqref{eq:split} is obtained by setting
\begin{equation} 
\braket{a_w}{f} := \braket{a^\om}{W_\om f}.
\label{eq:W-ext} 
\end{equation}
The $a^\om$ involved here is~$d\chi$ with enlarged domain. If
furthermore assumption $d|_{\sR}=0$ is satisfied, the infrared behaviour
of~$d(y)$ is harmless. Hence, one may simply choose $w(y) = 1$ for all
$y \in \bR^4$. Then the correspondent splitting solution $a_{w=1}$ is
actually the central solution $a^c$~\eqref{eq:central-sol}.
Substituting $d\chi$ for $a^\om$ and further using the convolution
formula
$$
\hat f \star \hat\chi(k) = (2\pi)^4 \frac{i}{2\pi} \int_\bR
\frac{dt}{t + i0}\, \hat f(k - tv)
$$
and $\widehat{fg} = (2\pi)^{-4} \hat{f} \star \hat{g}$, we see that
\begin{equation}
\hat a^c(k) = \frac{i}{2\pi} \int_\bR \frac{dt}{t + i0}\, \biggl[
\hat d(k - tv) - \sum_{|b|=0}^\om \frac{k^b}{b!}\, 
\del^b\hat d(-tv) \biggr].
\label{eq:ac} 
\end{equation}
This splitting integral does not depend on the choice of $v \in V_+\x
V_+$. Moreover, for $k \in V_\eta \x V_\eta$, where $\eta \in
\{+,-\}$, we may choose $v := \eta k$ -- that $v$ vary with~$k$ is
admissible. With some extra work \cite[Prop.~3.4]{Scharf14}, this
formula is then simplified into a \textit{convergent} dispersion
integral:
\begin{equation}
\hat a^c(k) = \frac{i\eta}{2\pi} \int_\bR dt\,
\frac{\hat d(tk)}{(t - \eta i0)^{\max\{\om+1,\,0\}}(1 - t + i\eta 0)}
\word{for} k \in V_\eta \x V_\eta \,.
\label{eq:hoist-with-retard} 
\end{equation}
In the applications treated in this paper, $d(tk)$ is of the form
\begin{equation}
\hat d(tk) = \eta\,\sgn(t)\,\th(t^2 - t_{\min}^2)\, f\bigl( t^2 k_1^2,
t^2 k_2^2, t^2 (k_1 + k_2)^2 \bigr) \word{for} k\in V_\eta \x V_\eta
\,,
\label{eq:d-form} 
\end{equation}
for some $f \in \sS'(\bR^3)$, where $t_{\min} > 0$ depends on the
squares of the momenta. So finally, introducing the new integration
variable $u := t^2$, the integral \eqref{eq:hoist-with-retard} goes
over into
\begin{equation}
\hat a^c(k) = \frac{i}{2\pi} \int_{t_{\min}^2}^\infty du\,
\frac{f(uk_1^2, uk_2^2, u(k_1 + k_2)^2)}
{u^{\max\{\piso{\om/2}+1,\,0\}}(1 - u + i\eta 0)}
\word{for} k \in V_\eta \x V_\eta \,,
\label{eq:pro-reo} 
\end{equation}
where $\piso{\cdot}$ denotes the integer part.


\end{document}